\documentclass[12pt,letterpaper]{article}
\usepackage[utf8]{inputenc}

\title{The Role of Precedence Order in Matching\\with Multi-Criteria Admissions
\footnote{Authors are listed in alphabetical order. We are grateful to Fuhito Kojima, Kenzo Imamura, and Daisuke Hirata, as well as to all educators and government officials with whom we discussed efforts to improve the matching process for public high school admissions in Japan. This work has been supported by JST PRESTO Grant Number JPMJPR2368 and JST ERATO Grant Number JPMJER2301, Japan. All remaining errors are our own.}}
\author{Shunya Noda\thanks{Corresponding Author. Graduate School of Economics, the University of Tokyo, 7-3-1
Hongo, Bunkyo-ku, Tokyo, 113-0033, Japan. E-mail:
\href{mailto:shunya.noda@e.u-tokyo.ac.jp}{shunya.noda@e.u-tokyo.ac.jp}}
\and Ayano Yago\thanks{Department of Economics, New York University, 19 West 4th Street, New York, NY 10012, U.S.A. E-mail: \href{mailto:ayano.yago@nyu.edu}{ayano.yago@nyu.edu}}}
\date{\today}

\usepackage[utf8]{inputenc}
\usepackage[margin=1in]{geometry}
\usepackage{mathtools}
\usepackage{amsmath}
\usepackage{amsthm}
\usepackage{amssymb}
\usepackage{setspace}
\usepackage{booktabs}
\usepackage{tabularx}
\usepackage{amsmath} 
\usepackage{amssymb}
\usepackage{bm}
\usepackage{ascmac}
\usepackage{setspace}
\usepackage[at]{easylist}
\usepackage[normalem]{ulem}
\usepackage{comment}

\mathtoolsset{showonlyrefs}  

\makeatletter

\usepackage{amsthm}
\usepackage{amsfonts}
\usepackage{bbm}
\usepackage{graphics}
\usepackage{enumerate}
\usepackage{float}
\usepackage{caption}
\usepackage{subcaption}
\usepackage{natbib}
\usepackage{epigraph}
\usepackage{tabularx}
\usepackage{threeparttable}
\usepackage{array}
\usepackage{ragged2e}

\theoremstyle{definition}
\newtheorem{thm}{Theorem}
\newtheorem{lem}{Lemma}

\newtheorem{prop}{Proposition}
\newtheorem{cor}{Corollary}

\newtheorem{example}{Example}

\newtheorem{condition}{Condition}

\theoremstyle{remark}

\usepackage[hyphens]{url}
\usepackage[colorlinks,urlcolor=blue, citecolor=blue, menucolor=blue]{hyperref}

\theoremstyle{plain}

\providecommand{\keywords}[1]
{
  \small	
  \textbf{Keywords:} #1
}

\usepackage{algorithm,algpseudocode}

\usepackage{amsmath}

\newcommand*{\bmu}{\boldsymbol{\mu}}

\newcommand*{\best}{\mathbf{Best}}
\newcommand*{\topacc}{\mathbf{Top}}
\newcommand*{\swap}{\mathbf{Swap}}
\newcommand*{\relab}{\mathbf{Relab}}

\newcommand*{\Ch}{\mathrm{Ch}}

\usepackage{circledsteps}

\allowdisplaybreaks

\makeatother

\begin{document}
\maketitle

\begin{abstract}
Admission systems often fill seats by applying criterion-specific rankings
sequentially. We derive criterion-wise comparative statics with respect to
precedence order when rankings may be arbitrarily nonaligned, and each
criterion evaluates the entire admitted set through a responsive preference.
For a fixed applicant set, reversing the final two blocks makes each criterion
weakly prefer the outcome in which its block is later. This comparison can
reverse if two blocks are followed by another criterion, even under reserve-type rankings. Under
student-proposing deferred acceptance, it can also reverse with two criteria
because rejection chains change the focal college's applicant set; such
feedback is the only possible source of failure. In regular large markets, failures vanish for almost all colleges and,
under sufficient thickness, uniformly across colleges. Under the Boston mechanism, final acceptance prevents feedback; thus
the comparison holds in every finite market for fixed submitted rank-order
lists and criterion-independent acceptability.
\end{abstract}

\keywords{multi-criteria admissions, reserve, precedence order, matching markets, school choice, slot-specific priority, deferred acceptance, Boston mechanism, rejection chain}

\pagebreak

\section{Introduction}
College admissions are best understood not as a contest to identify the strongest applicants, but as a choice problem to shape the composition of incoming classes. Diversity in backgrounds, perspectives, and skills can enhance learning through interaction, improve problem-solving by bringing complementary viewpoints, and foster creativity and innovation. These considerations are not specific to higher education. Similar arguments are commonly invoked in firms and other organizations, where team composition affects their productivity. In environments characterized by complex tasks and knowledge production, heterogeneity among members is often viewed as a productive asset rather than a constraint.

To promote such diversity, universities often adopt multiple evaluation criteria rather than a single measure of merit. While this process can be theoretically formulated as a problem of selecting an optimal subset of admitted students from a pool of applicants, such a formulation is typically too complex to implement in practice. Instead, admissions systems often rely on several salient evaluation dimensions, such as academic excellence, creativity, extracurricular involvement, athletic ability, leadership, and life circumstances.\footnote{According to \citet{SFFA_Harvard_2019}, Harvard University evaluates applicants along multiple dimensions, including academic achievement, extracurricular activities, and personal qualities, and makes final admissions decisions through committee deliberation rather than mechanical aggregation of scores.}

Despite the practical importance of admission systems with multiple evaluation criteria, relatively little is known about how such systems should be designed, beyond cases for reserve or quota policies for affirmative action for targeted groups \citep[e.g.,][]{hafalir2013effective,dur2018reserve,dur2020explicit,papai2022targeted}. Even in a simple environment where applicants are evaluated along two dimensions, such as academic achievement and athletic ability, admission rules can differ. For instance, an admissions system that allocates half of the available slots based on academic performance and the other half based on athletic performance can generate an admission outcome that resembles the one produced by a single rank-order list. However, from the college's perspective, what matters is not how academically strong the admitted students are within the academic category alone, but how academically strong the entire admitted class is across all categories. In particular, applicants who meet the admission standards in both dimensions may be admitted through different criteria depending on the admissions procedure, and this choice can affect the overall composition of the class. These issues already arise in a single-college choice problem, and they become substantially more complex once interactions across multiple colleges are taken into account.

In this paper, we study how the composition of an admitted class changes when
a college varies the \emph{precedence order} in which multiple criteria are
applied. Each criterion induces a distinct ranking of applicants, and we do not impose the structural assumptions commonly adopted in studies of affirmative action (such as common underlying rankings shared across criteria). We first
analyze the college's sequential choice from a fixed applicant set. We then
embed this choice problem in matching markets under student-proposing deferred
acceptance (DA) and compare it with the Boston mechanism. This comparison
isolates two forces: the direct effect of sequential selection within a college and the feedback generated when its rejection decisions alter
applications elsewhere in the market. Our main results are summarized in Table~\ref{tab:order_results}.

\begin{table}[t]
\centering
\begin{threeparttable}
\caption{Summary of Precedence-Order Comparisons}
\label{tab:order_results}

\small
\setlength{\tabcolsep}{5pt}
\renewcommand{\arraystretch}{1.12}

\begin{tabularx}{\textwidth}{
  @{}
  >{\RaggedRight\arraybackslash}p{0.25\textwidth}
  >{\RaggedRight\arraybackslash}p{0.2\textwidth}
  >{\RaggedRight\arraybackslash}X
  @{}
}
\toprule
Setting & Reversal & Result \\
\midrule

Single-college choice
& Terminal
& Later-block advantage \\

Single-college choice
& Upstream
& No uniform comparison \\

\addlinespace[0.15em]

DA, finite market
& Terminal
& No uniform comparison \\

DA, large market\tnote{a}
& Terminal
& Later-block advantage (asymptotic) \\

\addlinespace[0.15em]

Boston, finite market\tnote{b}
& Terminal
& Later-block advantage (exact) \\

\bottomrule
\end{tabularx}

\begin{tablenotes}[flushleft]
\footnotesize

\item[]
\emph{Later-block advantage}: each criterion weakly prefers
having its own block processed later.

\item[a]
Under regularity, failures occur at a vanishing expected share of
colleges; under sufficient thickness, the comparison holds with
probability approaching one for any specified sequence of colleges.

\item[b]
Submitted rank-order lists are fixed, and acceptability at the focal
college is criterion-independent.

\end{tablenotes}
\end{threeparttable}
\end{table}

In a single-college choice problem, when admissions across multiple criteria are conducted sequentially according to a fixed order, the reserve design literature has shown that the criteria positioned later are more strongly respected. The intuition is that applicants who satisfy the admission standards of multiple criteria tend to be admitted by the criterion positioned earlier, consuming that criterion's capacity. As a result, criteria positioned later can allocate their capacity to applicants who are not preferred by other criteria but are uniquely favored by them. We show that this insight continues to hold in our model, which imposes no structural restrictions on criterion preferences, when there are two criteria or when the evaluation order differs only through a local swap at the end of the precedence order.

However, this conclusion need not extend to upstream swaps, even under
reserve-type rankings \citep[studied by, e.g.,][]{dur2020explicit,
sonmez2022affirmative,papai2022targeted}. A counterexample arises already
with three criteria whose rankings are generated from a common underlying
ranking, so that all criteria rank any two applicants within the same group
identically. More generally, whenever the two swapped criterion blocks are
followed by at least one additional criterion, the swap can change which
applicants are selected in those blocks and thereby alter the pool available
downstream. The resulting downstream selections can reverse the comparison
at the level of the final admitted set, so the criterion whose block is moved
earlier may ultimately prefer the outcome. This observation shows that,
with three or more criteria, an upstream swap need not benefit the criterion
placed later. The positive comparison therefore depends critically on the
swapped blocks being terminal.


We further show that even with two criteria, the preferences of the later-positioned criterion need not be more strongly respected in markets with multiple colleges where matching outcomes are determined by DA. Changing the precedence order at a given college can alter which applicants are rejected at that college. In a matching market with two criteria, rejected students subsequently apply to other colleges, potentially triggering additional rejections, which in turn induce further applications. This phenomenon is called a rejection chain in the literature \citep{kojima2009incentives,ashlagi2017unbalanced}. When such a rejection chain eventually returns to the original college, students who would not have applied under the original precedence order may enter its choice set. As a consequence, a criterion that is moved to an earlier position in the precedence order may be able to admit applicants it prefers more strongly. Accordingly, in matching markets, the effect of precedence order can favor an earlier-positioned criterion even when there are only two criteria.

The logic above implies that a criterion can benefit from being moved earlier
only if the order change generates a rejection chain that returns to the
originating college. In large markets where such feedback does not arise, the outcome mirrors that of the single-college setting: with two criteria, the later-positioned criterion is more strongly respected. The probability that a rejection chain returns to its originating college declines as market size grows, a phenomenon that has been extensively analyzed in the literature on DA \citep{kojima2009incentives,ashlagi2017unbalanced}. Building on this no-return logic, we show that, in large markets, positioning a criterion later leads to stronger alignment between the admissions outcome and that criterion with probability approaching one.

Finally, we show that a large-market limit is not essential
when we use the Boston mechanism, which itself prevents such feedback. Holding submitted rank-order
lists fixed and imposing criterion-independent acceptability at the focal
college, each criterion weakly prefers the order in which it is placed later
in every finite market. Before the first round in which the two precedence
orders can select different students, they generate the same acceptance and
rejection history. That first order-sensitive round fills all remaining seats,
and Boston acceptances are final, so subsequent rejections elsewhere cannot
alter the focal college's class. Because the Boston mechanism is not
strategy-proof for students \citep{abdulkadiroglu2003school,ergin2006games}, this result is pointwise in submitted rank-order
lists rather than a comparison of equilibrium reports. Together, the DA and
Boston results identify feedback before the focal college's admissions
become final as the central mechanism-design distinction.

Our analysis has direct implications for the design of admission systems that
apply multiple evaluation criteria sequentially. Public high school admissions
in Japan provide a particularly relevant institutional setting and have
recently become the subject of broader policy review, including proposals to
introduce coordinated matching mechanisms such as DA. Admissions are
administered largely at the prefectural level and typically combine a common
academic examination with junior-high-school records and, in many schools,
interviews, practical tests, or other school-specific materials. 

Sequential or multi-stage selection is far from exceptional. Hokkaido
provides a transparent example: about 70\% of capacity is first
filled under a procedure that treats examination scores and junior-high-school
records equally; two subsequent blocks of about 15\% each emphasize school
records and examination scores, respectively, and each school chooses the
order of these two blocks. Official admissions rules in Gunma, Miyagi, and
Aomori likewise specify school- or program-specific capacity shares and the
order in which distinct selection methods are applied.\footnote{See the official admissions materials issued by the boards of
education of
\href{https://www.dokyoi.pref.hokkaido.lg.jp/hk/gks/201495.html}
{Hokkaido},
\href{https://www.pref.gunma.jp/site/kyouiku/714649.html}
{Gunma},
\href{https://www.pref.miyagi.jp/site/sub-jigyou/kyo-r9-senbatsuhouhoutou.html}
{Miyagi}, and
\href{https://www.pref.aomori.lg.jp/soshiki/kyoiku/e-gakyo/R09motomeru.html}
{Aomori}.}
Related multi-stage selection procedures are also found in many other
prefectures across Japan.
Although the institutional details differ, these systems make the sequencing
of evaluation criteria a substantive component of admissions design.

Osaka Prefecture provides a particularly salient reform case because it is
changing where a school-specific criterion enters the selection sequence.
In addition to exam scores
and grade records, Osaka allows schools to use
school-specific criteria reflecting their educational missions. Through the
2027 admission cycle, applicants in the general selection are ordered by a
composite score. A border zone extends 10\% of capacity on either side of the
cutoff, and applicants in that zone who are judged to fit a school's admission
policy particularly well may be admitted ahead of their composite-score rank.
Thus, the admission-policy assessment is applied to an academically defined
eligibility band rather than as a second ranking of the entire applicant pool.
However, beginning with the 2028 admission cycle, schools may instead use a
school-specific track in the first selection stage for up to 50\% of capacity.
Only students who opt into that track are considered under its school-specific
materials and method. The remaining seats are then filled by the common composite
score.\footnote{See the official materials on the admissions reform issued by the
\href{https://www.pref.osaka.lg.jp/o180040/kotogakko/gakuji-g3/senbatsukaizen.html}
{Osaka Prefectural Board of Education}.}

This reform highlights a general design question addressed by our analysis:
how do a criterion's nominal seat share and its position in the sequence
translate into substantive influence over the composition of the admitted
class? Our theorems do not provide an overall evaluation for Osaka's reform because the changes span multiple areas. However, they provide a ceteris paribus comparative static that isolates
precedence: holding other conditions fixed,
moving a mission-specific block earlier cannot improve the admitted set
from that criterion's perspective and can make it strictly worse. In this way, our findings formally demonstrate that the seemingly self-evident policymaker intuition that selection based on more important criteria should be conducted earlier can be misleading, thereby contributing to more effective policymaking in real-world institutional design.

\section{Related Literature}

Our paper relates first to the literature that models institutions through
choice functions. Classical many-to-one matching assumes responsive
institutional preferences, while subsequent work identifies substitutability
and related conditions under which stable matchings exist and deferred
acceptance retains desirable properties
\citep{gale1962college,kelso1982job,roth1984stability,blair1988lattice,
hatfield2005matching,hatfield2010substitutes}. Our single-college choice rule is 
lexicographic: an ordered list of rankings is used to select
successively the highest-ranked remaining applicant
\citep{chambers2018lexicographic}. The same primitive underlies matching with
slot-specific priorities, where each slot has its own priority ranking and
slots are processed according to a precedence order
\citep{kominers2016matching}. Related work characterizes lexicographic choice
under variable capacity constraints \citep{dogan2021lexicographic} and extends
slot-specific-priority systems through dynamic reserves and capacity transfers
\citep{aygun2020dynamic,avataneo2021slot}. Whereas these papers primarily study
the properties of the resulting choice rules and matching mechanisms, we hold
fixed the component rankings and derive comparative statics with respect to
their precedence order.

A second strand studies institutional objectives over the composition of
admitted students. Matching with distributional constraints has been used to
analyze diversity requirements, minimum quotas, and regional or system-wide
caps
\citep{westkamp2013analysis,kamada2015efficient,kamada2024fair,
fragiadakis2017improving}. In school choice, reserve-based affirmative action
can avoid some adverse effects of majority quotas
\citep{kojima2012school,hafalir2013effective}, while soft bounds and
appropriately designed choice rules can reconcile diversity objectives with
stability and non-wastefulness
\citep{ehlers2014school,echenique2015control}. Further work studies
multidimensional privileges and overlapping vertical and horizontal
reservations
\citep{aygun2021multidimensional,sonmez2022affirmative}.
\citet{imamura2025meritocracy} formalizes the trade-off between meritocracy
and diversity in admissions and hiring. Our analysis is complementary: we
take the criterion-specific rankings and the number of seats assigned to each
criterion as given and ask how sequencing alone changes the admitted students.

Closest to our comparative-statics question is the literature on processing
order in reserve systems. \citet{dur2018reserve} show that the order in which
open and walk-zone-reserved seats are processed is a substantive design
choice and that reserve-first processing can have unintended effects.
\citet{dur2020explicit} study Chicago's four socioeconomic tiers and show
that seat-processing order is an additional instrument for explicit and
statistical targeting. \citet{pathak2023fair} formulate reserve systems with
multiple categories and category-specific priorities, define sequential
reserve matching through an order of precedence, and derive comparative
statics for category-specific maximum cutoffs. In a model allowing
overlapping types and priorities that differ across reserved copies,
\citet{almeer2024increasing} show how precedence can be modified to increase
the representation of a targeted type. Related order-sensitive designs arise
in immigration-lottery allocation \citep{pathak2025immigration}, and
behavioral evidence documents substantial misunderstanding of reserve
processing rules \citep{pathak2023reversing}. Rather than selecting a
particular sequential order, \citet{delacretaz2021processing} studies a
simultaneous reserve rule that does not rely on a precedence order. \citet{papai2022targeted} develop targeted priority reserves and
the broader class of DA mechanisms with sequential priority reserves,
distinguish representation from effective preferential treatment, and
compare alternative processing sequences. Most directly related to our DA
analysis, a favorable comparison of a college's choice sets need
not extend to a Pareto comparison of the DA mechanisms for priority
agents because different selection rules may generate different rejection
chains. 

Our model differs from reserve models in both its primitives and its welfare
comparison. In much of the reserve literature, category priorities are
generated from group eligibility and a baseline ranking. More recent models
allow multiple categories, overlapping eligibility, and category-specific
priorities, but our generalization is in a different direction. Criteria need
not represent eligibility groups, every criterion may rank all applicants,
and these rankings may be arbitrarily nonaligned. Moreover, each criterion
evaluates the entire admitted set according to its responsive preference,
rather than through the number of admitted members of a targeted type,
category-specific cutoffs, or the welfare of applicants.

Finally, our matching-market analysis relates to the literature on rejection
chains and large random matching markets
\citep{kojima2009incentives,ashlagi2017unbalanced}. We adapt these ideas to
show that feedback into the focal college becomes negligible under
regularity, thereby recovering the single-college comparison asymptotically.
The Boston mechanism and its strategic properties are studied by
\citet{abdulkadiroglu2003school} and \citet{ergin2006games}. Our Boston result,
which is pointwise in submitted rank-order lists and assumes
criterion-independent acceptability, highlights a different route to the
same comparison: final acceptance prevents subsequent feedback from revising
the focal college's class. Relatedly, \citet{chaudhury2024affirmative} study
a hybrid affirmative-action mechanism that uses immediate acceptance for
reserved seats and deferred acceptance otherwise. Their focus is on welfare,
fairness, and incentives, whereas ours provides a pointwise comparison of
the focal college's admitted sets across precedence orders.
Taken together, our contribution is to characterize the criterion-wise effects of precedence order under arbitrary rankings and to identify how these effects interact with applicant-set feedback and finality across mechanisms.

\section{Model}

There are finite sets $I$ and $C$ of students and colleges. Each student $i \in I$ has a strict preference order $\succ_i$ over the set of colleges $C \cup \{\emptyset\}$. Each college $c \in C$ has capacity $q_c \in \mathbb{Z}_{++}$. For convenience, the $q_c$ slots of each college $c$ are treated as distinct matching objects. Formally, each slot is represented by a pair $s = (c,k)$, where $c$ is the college and  
$k \in \{1,\dots,q_c\}$ is the slot index within that college. Let $S$ denote the set of all slots.

We consider a two-sided matching problem between students $I$ and slots $S$. We apply the student-proposing deferred acceptance (DA) algorithm and obtain the student-optimal stable matching. At the college level, each student $i$ is indifferent among all slots of the same college. To implement the sequential admission, we use a fixed refinement of the college-level preference: slots from different colleges are compared according to $\succ_i$, and slots within the same college are ranked by increasing slot index. Formally, we define a preference order $\succ_i'$ over slots as follows:
\begin{equation}
(c,k) \succ_i' (c',k')
\quad \text{if and only if either} \quad
\text{(i) } c \succ_i c'
\;\; \text{or} \;\;
\text{(ii) } c = c' \text{ and } k < k'.
\end{equation}
In addition, $(c, k)\succ_i' \emptyset$ if and only if $c \succ_i \emptyset$.
Let $\succeq_i'$ denote the weak preference relation induced by
$\succ_i'$.
This within-college refinement is part of the mechanism and is held fixed throughout the analysis. With slot-specific priorities, a different ordering of a college's slots can change its college-level admitted set; the increasing-index convention is what implements the precedence order studied below.

Each college $c$ has a finite set $T_c$ of admission criteria. Each criterion $t \in T_c$ is represented by a linear preference order $\succ_t$ over students $I \cup \{\emptyset\}$. We suppress the college index and write $\succ_t$ when focusing on one college. Among the $q_c$ slots of college $c$, some slots are assigned the preference order $\succ_t$ of criterion $t$, and these slot preferences are used on the slot side in DA. Which criterion is assigned to each slot is specified by a \emph{precedence order},
\begin{equation}
    v = (v(1), \dots, v(q_c)) \in T_c^{q_c},
\end{equation}
viewed as a finite sequence of criterion labels. If college $c$'s precedence order is $v$ and $v(k) = t$, then the preference of slot $(c,k)$ is $\succ_t$.

We use the following terminology for precedence orders. For
$w=(w(1),\ldots,w(q_c))$ and $h\in\{0,\ldots,q_c\}$, the sequence
$(w(1),\ldots,w(h))$ is the \emph{prefix of $w$ of length $h$}, and
$(w(h+1),\ldots,w(q_c))$ is the \emph{suffix of $w$ after the first
$h$ slots}. Two precedence orders $w$ and $w'$ share a
\emph{common prefix} $u\in T_c^h$ if
\[
    w(k)=w'(k)=u(k)
    \qquad\text{for every }k=1,\ldots,h.
\]
A \emph{$t$-block} is a consecutive segment of slots assigned to
criterion $t$; it need not be maximal. The prefix is empty when $h=0$,
the suffix is empty when $h=q_c$, and a displayed block is empty when
its stated length is zero.

For a criterion $t$, a set $Z\subset I$, and $r\in\mathbb Z_+$, let $ Z_t^+\coloneqq\{i\in Z:i\succ_t\emptyset\}$ and $\best_t(Z,r)$ denote the set of the $\min\{r,|Z_t^+|\}$ highest-ranked students in $Z_t^+$; in particular, $\best_t(Z,0)=\emptyset$. Let $\topacc_t(Z)$ denote the highest-ranked member of $Z_t^+$ when this set is nonempty, and set $\topacc_t(Z)=\emptyset$ otherwise. Given a precedence order $v$ and an applicant set $X$, a college's choice $\Ch_v(X)$ is defined recursively as follows: at slot $k$, criterion $v(k)$ selects the best applicant with respect to $v(k)$ from those who have not been selected at earlier slots. If this value is $\emptyset$, the slot is left vacant. Thus, $r$ consecutive $t$-slots facing $Z$ select exactly $\best_t(Z,r)$, whose cardinality can be smaller than $r$.

We focus on a focal college $c_0$. 
We fix all students' preferences and all criterion rankings, as well as
the precedence orders of all colleges other than $c_0$.
We then analyze how the set of students matched with college $c_0$ changes when college $c_0$ varies its precedence order $v$.

For each criterion $t$, fix an arbitrary complete and transitive preference relation $\succeq_t$ over subsets of students that is responsive to $\succ_t$. Formally, for every set $S$ and distinct students $i_1,i_2\in I\setminus S$, (i) $S\cup\{i_1\}\succ_t S\cup\{i_2\}$ if and only if $i_1\succ_t i_2$, and (ii) $S\cup\{i_1\}\succ_t S$ if and only if $i_1\succ_t\emptyset$. Thus, deleting a student whom $t$ ranks below $\emptyset$ weakly improves a set. A responsive extension need not be unique. Every set comparison below follows from a finite sequence of improving replacements, additions of acceptable students, or deletions of unacceptable students, and therefore holds for every such extension.

For our analysis under student-proposing DA, we impose no
restrictions on students' preferences, nor on the preferences or precedence
orders of colleges other than $c_0$. Therefore, for the DA sections, without loss
of generality, we may represent every college other than $c_0$ as a collection
of capacity-one colleges whose objects are ranked consecutively by every student.
For example, suppose that a college $c^*$ has two seats and two criteria $A$ and $B$,
and that its precedence order assigns these criteria in this order.
For the purpose of analyzing college $c_0$, this situation is equivalent to having two colleges $c_A$ and $c_B$, each with capacity one and a single criterion, where the preference of $c_A$ is the same as the preference of criterion $A$, the preference of $c_B$ is the same as the preference of criterion $B$, and in every student's preference list, $c_B$ is listed immediately after $c_A$ whenever $c_A$ or $c_B$ appears. Slightly abusing notation, we drop the subscript indicating the college from $c_0$'s capacity $q_{c_0}$ and the set of criteria $T_{c_0}$ and simply write them as $q$ and $T$.

We study how the set of students matched with college $c_0$ responds when college $c_0$ changes its precedence order. In particular, we examine whether, from the perspective of each criterion of college $c_0$, the resulting set of matched students becomes more favorable. We fix the preferences of all students and of all colleges other than $c_0$. When college $c_0$ adopts a precedence order $v$, let $\bmu(v)$ denote the matching produced by the student-proposing DA, and let $\mu(v)$ denote the set of students matched with college $c_0$ under this matching. In particular, the student matched to the $k$-th seat of college $c_0$ is denoted by $\mu_k(v)$.

\section{Single College's Choice Problem}

We begin by considering a single college's choice problem, rather than the full matching market. Maintaining the assumption that the matching is determined by the student-proposing DA, this case is equivalent to a market in which $c_0$ is the only college, that is, $C = \{c_0\}$. Without loss of generality, within this section, we assume that every student $i$ accepts college $c_0$: $c_0 \succ_i \emptyset$. In this environment, college $c_0$'s choice solely determines its matched students: $\mu(v) = \Ch_v(I)$.

In the preference lists displayed below, $\emptyset$ is understood to
rank below all listed alternatives when it is omitted, and any unlisted
alternative is ranked below $\emptyset$.

\subsection{Expansion}

We first consider the case in which the number of slots assigned to criterion $t$ is increased. Since we focus on a single college's choice problem, one might expect that increasing the number of slots evaluated by criterion $t$ would make the set of students admitted by college $c_0$ more favorable with respect to $\succ_t$. The following example shows, however, that this intuition does not always hold.

\begin{example}\label{exa:expansion_three}
Let $I=\{i_1,i_2,i_3\}$, $q=2$, and $T=\{X,Y,Z\}$. Suppose the preference orders of the criteria are as follows:
\begin{align}
    \succ_{X}:&\: i_1, i_2, i_3;\\
    \succ_{Y}:&\: i_2, i_1, i_3;\\
    \succ_{Z}:&\: i_1, i_3, i_2.
\end{align}
We can verify that $\Ch_{Y, Z}(I) = \{i_1, i_2\}\succ_{X}\Ch_{X, Z}(I) = \{i_1, i_3\}$, even though $(X, Z)$ is obtained from $(Y, Z)$ by replacing the slot assigned to criterion $Y$ with a slot assigned to criterion $X$.
\end{example}

In Example~\ref{exa:expansion_three}, although the number of slots assigned to criterion $X$ is increased, the set of students admitted by college $c_0$ becomes worse with respect to $\succ_X$. The reason is that, by not admitting student $i_1$, who is most preferred by criterion $X$,
at the first slot, criterion $X$ can induce criterion $Z$ to admit student $i_1$ at the second slot. When the first slot is assigned to criterion $Y$, this outcome arises naturally. Moreover, the student most preferred by criterion $Y$ is $i_2$, who is the second-most preferred
student of criterion $X$. As a result, when the precedence order is $(Y, Z)$, the admitted set is $\Ch_{Y, Z}(I) = \{i_1,i_2\}$, which is ideal from the perspective of criterion $X$. In contrast, when the criterion assigned to the first slot is replaced, and the precedence order becomes $(X, Z)$, the outcome becomes less favorable for criterion $X$: $\Ch_{X, Z}(I) = \{i_1, i_3\}$.

The structure of this problem resembles the incentives faced by receivers to misreport their preferences in DA. In DA, for example, a receiver with true preference $\succ_X$ does not benefit, in a direct
sense, from misreporting its preference as $\succ_Y$ in terms of improving the students it keeps at that stage. However, by changing which students are rejected, such misreports can alter the entire market situation and may ultimately be beneficial. In our multi-criterion admissions setting, the student admitted at a given slot affects which students remain available for downstream slots. Through this channel, it may fail to be optimal for each criterion to myopically admit the student it ranks highest at its own slot.

This issue arises whenever there exists a slot assigned to another criterion that comes after the slot newly assigned to a given criterion. In contrast, if no such slot exists, no strategic incentive arises, and each criterion optimally admits its most preferred student. Theorem~\ref{thm:category_slot_increasing} formalizes this observation.

\begin{thm}\label{thm:category_slot_increasing}
Take any criterion $t\in T$ and any precedence order $v$. Choose $k^* \in \{1,\dots,q\}$ arbitrarily, and define a precedence order $v'$ by $v'(k^*) \coloneqq t$ and $v'(k) \coloneqq v(k)$ for all $k \neq k^*$. If $v(k) = t$ for all $k > k^*$, then, for any $X \subset I$, $\Ch_{v'}(X) \succeq_t \Ch_{v}(X)$.
\end{thm}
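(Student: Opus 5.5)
The plan is to reduce the comparison to a single choice problem on the applicants still available when slot $k^*$ is reached, and then prove a simple dominance fact about $\best_t$. Fix $X\subset I$. The orders $v$ and $v'$ agree on slots $1,\dots,k^*-1$, and the recursive definition of $\Ch$ depends only on the criteria at earlier slots. Hence both orders select the same set $P$ in the first $k^*-1$ slots. Both then face the same remaining pool $Z\coloneqq X\setminus P$. Set $r\coloneqq q-k^*$. By hypothesis, every slot after $k^*$ is a $t$-slot under both orders.

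Under $v'$ the last $r+1$ slots are all $t$-slots facing $Z$, so
\[
\Ch_{v'}(X)=P\cup B',\qquad B'\coloneqq\best_t(Z,r+1).
\]
Under $v$, slot $k^*$ selects $j\coloneqq\topacc_{v(k^*)}(Z)$, which may be $\emptyset$. The remaining $r$ slots then select $\best_t(Z\setminus\{j\},r)$, so
\[
\Ch_v(X)=P\cup W,\qquad W\coloneqq\{j\}\cup\best_t(Z\setminus\{j\},r),
\]
where $\{j\}$ is read as empty when $j=\emptyset$. Both $B'$ and $W$ are subsets of $Z$, so they are disjoint from $P$. It therefore suffices to show $P\cup B'\succeq_t P\cup W$ by a finite chain of the elementary moves that responsiveness licenses.

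The key step is an element-wise dominance claim. Let $W^+$ be the set of members of $W$ that $t$ finds acceptable. First I will show $|W^+|\le|B'|$. This holds because $W^+\subset Z_t^+$ and $|W^+|\le r+1$, while $|B'|=\min\{r+1,|Z_t^+|\}$. Second, list $W^+$ and $B'$ in decreasing $\succ_t$ order. I will show that for each $\ell\le|W^+|$, the $\ell$-th member of $B'$ is weakly $\succ_t$-above the $\ell$-th member of $W^+$. This is immediate because $B'$ consists of the top $|B'|$ members of $Z_t^+$: the $\ell$-th best of any subset of $Z_t^+$ is weakly below the $\ell$-th best of $Z_t^+$ itself. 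Note that the argument does not need the particular structure of $W$, only that $W^+\subset Z_t^+$ and $|W^+|\le r+1$.

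From $P\cup W$, I will build the chain to $P\cup B'$ in three stages.
\begin{enumerate}
\item If $j$ is unacceptable to $t$, delete it; part (ii) of responsiveness makes this a weak improvement.
\item Replace members of $W^+\setminus B'$ one at a time by members of $B'\setminus W^+$. Using the matched rank ordering above, each replacement swaps a student for a weakly $\succ_t$-better one, which is licensed by part (i) of responsiveness. The students exchanged are distinct and lie outside the current set, so part (i) applies.
\item Add any leftover members of $B'$; they are acceptable, so part (ii) makes each addition an improvement.
\end{enumerate}
The step needing the most care is the second: it requires an explicit matching between $W^+\setminus B'$ and $B'\setminus W^+$ that respects the rank order. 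I would obtain this from the sorted lists by pairing common elements with themselves, which is the standard argument that rank-wise dominance implies preference for every responsive extension. This matches the paper's remark that each set comparison holds for every responsive extension.
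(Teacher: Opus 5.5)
Your proof is correct and follows the paper's route. Both arguments reduce to the common pool $Z=X\setminus P$ at slot $k^*$ and compare $\best_t(Z,r+1)$ with $\{j\}\cup\best_t(Z\setminus\{j\},r)$ through responsiveness moves; the paper splits into cases ($j=\emptyset$, $j$ acceptable to $t$, $j$ unacceptable to $t$), whereas your single rank-dominance argument covers all three at once and makes explicit the replacement chain that the paper only asserts, and here the pairing is even trivial because $B'$ is a top segment of $Z_t^+$ and so beats every member of $W^+\setminus B'$.
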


The full proofs are presented in Appendix~\ref{sec:proofs}.

By repeatedly applying Theorem~\ref{thm:category_slot_increasing}, we can also show that an expansion of criterion $t$ that replaces all slots in the last segment of the precedence order with criterion $t$ always makes the set of admitted students more favorable with respect to $\succ_t$.

\begin{cor}\label{cor:repeatedly_apply}
    In the single-college choice problem, take any precedence order $v$ and criterion $t\in T$, and define a precedence order $v^m$ by $v^m(k) \coloneqq v(k)$ for $k = 1, \dots, q - m$ and $v^m(k) \coloneqq t$ for $k = q - m + 1, \dots, q$. Then, for every $m \in \{0, \dots, q\}$ and for every $X \subset I$, $\Ch_{v^m}(X) \succeq_t \Ch_v(X)$.
\end{cor}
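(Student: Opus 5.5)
We prove Corollary~\ref{cor:repeatedly_apply} by induction on $m$, applying Theorem~\ref{thm:category_slot_increasing} once per step and chaining the comparisons by transitivity of $\succeq_t$. The base case $m=0$ is immediate: $v^0=v$, so $\Ch_{v^0}(X)=\Ch_v(X)$, and reflexivity of the complete relation $\succeq_t$ gives $\Ch_{v^0}(X)\succeq_t\Ch_v(X)$.

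For the inductive step, suppose $\Ch_{v^{m}}(X)\succeq_t\Ch_v(X)$ for some $m<q$. Set $k^*\coloneqq q-m$ and observe that $v^{m+1}$ is obtained from $v^m$ by setting $v^{m+1}(k^*)\coloneqq t$ and leaving every other slot unchanged. For $k\le q-m-1$ both orders agree with $v$. For $k\ge q-m+1$ both equal $t$. So the two orders differ at most at slot $k^*$. Moreover, every slot $k>k^*$ satisfies $v^m(k)=t$, which is exactly the hypothesis of Theorem~\ref{thm:category_slot_increasing} applied with $v^m$ in the role of $v$. That theorem yields $\Ch_{v^{m+1}}(X)\succeq_t\Ch_{v^m}(X)$ for every $X\subset I$. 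Combining this with the inductive hypothesis through transitivity of $\succeq_t$ gives $\Ch_{v^{m+1}}(X)\succeq_t\Ch_v(X)$.

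The only point needing care is checking that the one-slot relationship between $v^m$ and $v^{m+1}$, together with the tail condition, matches the theorem's hypotheses exactly. The case $v^m(k^*)=t$ already is harmless, since then $v^{m+1}=v^m$. There is also the issue that $\succeq_t$ is an arbitrary responsive extension. Each application of Theorem~\ref{thm:category_slot_increasing} holds for every such extension, so we may fix a single extension throughout and use its transitivity. No matching-market considerations enter, because the statement concerns only the choice function $\Ch$.
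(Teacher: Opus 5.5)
Your proof is correct and takes the paper's approach: induction on $m$, with Theorem~\ref{thm:category_slot_increasing} applied to $v^m$ at slot $k^*=q-m$ (all later slots already assigned $t$), chained by transitivity of $\succeq_t$. The paper treats $m=1$ as a separate base case, but your inductive step from $m=0$ already covers it.
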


\subsection{Swap}

Next, we examine the effects of changing the order of criteria in the precedence order while keeping the number of slots assigned to each criterion fixed. \citet{dur2018reserve} show that, when there are two types of students, reserve-eligible students and open-category (reserve-ineligible) students, placing the open category before the reserve-eligible
category makes every reserve-eligible student more likely to be admitted than the reverse order. This model is a special case of ours in which the reserve-eligible category strictly prefers reserve-eligible students to open-category students. This result suggests that the criterion positioned later in the precedence order
may be better respected. However, in our model, criteria may have arbitrary preferences, and a college may have three or more criteria.
Therefore, we need to examine whether the result of \citet{dur2018reserve} continues to hold in this more general setting.

The following example demonstrates that, when two criteria are each assigned one slot, the admitted set is weakly more favorable to the criterion processed later.

\begin{example}\label{exa:single_college_swap_two_categories}
Let $I = \{i_1, i_2, i_3, \dots, i_n\}$ for some $n \ge 3$, $q = 2$, and $T = \{A, B\}$. Suppose that every student is acceptable to both criteria. Without loss of generality, we assume that criterion $A$'s preference is given by $\succ_A: i_1, i_2, i_3, \dots, i_n$. We consider all possible cases for $\succ_B$ and compare exhaustively which precedence order,
$(A,B)$ or $(B,A)$, is more favorable with respect to $\succ_A$.

If the student most preferred by criterion $B$ is different from the one most preferred by criterion $A$ (i.e., $i_1$), then under both precedence orders $(A,B)$ and $(B,A)$, each criterion admits its most preferred student, and the admitted set does not depend on the precedence order. If the student most preferred by criterion $B$ is $i_1$, and the second-most preferred student of criterion $B$ is also the same as that of criterion $A$, namely $i_2$, then the admitted set is $\{i_1,i_2\}$ regardless of the precedence order.

However, when the student most preferred by criterion $B$ is $i_1$ but its second-most preferred student is not $i_2$, the outcome depends on the precedence order. Student $i_1$ is ranked highest by both criteria $A$ and $B$ and is therefore admitted at the first slot regardless of the precedence order. The student ranked second, however, differs across criteria. When criterion $A$ is assigned the second slot, it can admit student $i_2$, who is its second-most preferred student. In contrast, when criterion $A$ is assigned the first slot, the second slot is filled by the student most preferred by criterion $B$ among the remaining students, who is not necessarily preferred by criterion $A$. As a result, in this case, the precedence order $(B,A)$, in which criterion $A$ is positioned later, yields an admitted set that is more favorable for criterion $A$
than the precedence order $(A,B)$.
\end{example}

In many admissions settings, multiple evaluation criteria agree on the very best applicants, while they rank mid-ranked applicants differently. As Example~\ref{exa:single_college_swap_two_categories} illustrates, such disagreements at the margin can make the relative position of criteria in the precedence order consequential for the composition of the admitted set of students.

This finding can be generalized. First, since we consider a single college's choice problem, the presence of slots or criteria that appear earlier in the precedence order than the slots being swapped does not affect the argument. Second, when two criteria each have multiple slots, the analysis becomes more involved, as it requires a classification of cases in which the sets of students favored by the two
criteria coincide or differ. Nevertheless, it can be shown that
the same qualitative result continues to hold. The following theorem formalizes these observations.

\begin{thm}\label{thm:category_utility_increasing}
Fix two criteria $A,B\in T$ and integers
$\bar q,q_A,q_B\in\mathbb Z_+$ such that
$\bar q+q_A+q_B=q$. Fix a sequence $v\in T^{\bar q}$, and define
the precedence orders
\begin{align}
v_{AB}
&\coloneqq
\bigl(
v,
\underbrace{A,\ldots,A}_{q_A},
\underbrace{B,\ldots,B}_{q_B}
\bigr),
\label{eq:vAB}\\
v_{BA}
&\coloneqq
\bigl(
v,
\underbrace{B,\ldots,B}_{q_B},
\underbrace{A,\ldots,A}_{q_A}
\bigr).
\label{eq:vBA}
\end{align}
Thus, the two orders share the prefix $v$ and differ only by reversing
their final $A$- and $B$-blocks. Then, for every $X\subset I$,
\begin{equation}
\Ch_{v_{BA}}(X)\succeq_A\Ch_{v_{AB}}(X)
\quad\text{and}\quad
\Ch_{v_{AB}}(X)\succeq_B\Ch_{v_{BA}}(X).
\label{eq:two_block_comparison}
\end{equation}
\end{thm}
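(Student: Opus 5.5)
Since both orders share the prefix $v$ and the choice function is sequential, the first $\bar q$ slots select the same set $P$ under $v_{AB}$ and $v_{BA}$. Letting $Z\coloneqq X\setminus P$ denote the remaining pool, it therefore suffices to compare the two-block choices from $Z$:
\[
S_{AB}\coloneqq \best_A(Z,q_A)\cup \best_B\bigl(Z\setminus \best_A(Z,q_A),q_B\bigr),
\qquad
S_{BA}\coloneqq \best_B(Z,q_B)\cup \best_A\bigl(Z\setminus \best_B(Z,q_B),q_A\bigr).
\]
By responsiveness, adding the common set $P$ to both sides preserves the comparison. By the symmetry of the two claims under relabeling $A\leftrightarrow B$, it suffices to prove $S_{BA}\succeq_A S_{AB}$.

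I would build an explicit chain of improving operations from $S_{AB}$ to $S_{BA}$. Write $\alpha\coloneqq\best_A(Z,q_A)$, $\beta\coloneqq\best_B(Z,q_B)$, $\beta'\coloneqq \best_B(Z\setminus\alpha,q_B)$, and $\alpha'\coloneqq\best_A(Z\setminus\beta,q_A)$. I split both sets using $D\coloneqq \alpha\cap\beta$, which is common to both outcomes: $D\subset S_{AB}$ because $\alpha\subset S_{AB}$, and $D\subset S_{BA}$ because $\beta\subset S_{BA}$.

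\emph{Claim 1:} $\beta\setminus\alpha\subseteq\beta'$. Removing $\alpha$ from $Z$ deletes exactly the $|D|$ members of $\beta$ in $\alpha$, so $\beta'$ consists of $\beta\setminus\alpha$ plus up to $|D|$ further $B$-acceptable students from $Z\setminus(\alpha\cup\beta)$.

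\emph{Claim 2:} $\alpha\setminus\beta\subseteq\alpha'$. By the symmetric argument, $\alpha'$ consists of $\alpha\setminus\beta$ together with the next up to $|D|$ $A$-acceptable students of $Z\setminus(\alpha\cup\beta)$ in $A$'s ranking; call this extra set $E_A$. Similarly, let $E_B\coloneqq\beta'\setminus\beta$.

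Combining the claims gives
\[
S_{AB}=\alpha\cup(\beta\setminus\alpha)\cup E_B,
\qquad
S_{BA}=\beta\cup(\alpha\setminus\beta)\cup E_A .
\]
Both sets share the common core $\alpha\cup\beta$, and they differ only in $E_B$ versus $E_A$, which are both subsets of $W\coloneqq Z\setminus(\alpha\cup\beta)$ with $|E_B|,|E_A|\le |D|$. By construction, $E_A$ consists of the top $\min\{|D|,|W_A^+|\}$ $A$-acceptable students of $W$.

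To show $(\alpha\cup\beta)\cup E_A\succeq_A(\alpha\cup\beta)\cup E_B$, I proceed in three steps:
\begin{enumerate}
\item Delete the $A$-unacceptable members of $E_B$. Each such deletion weakly improves the set for $A$.
\item Pair the remaining $A$-acceptable members of $E_B\setminus E_A$ injectively with elements of $E_A\setminus E_B$, and replace each by its partner. Each replacement is an improvement because $E_A$ is $A$'s top segment of $W$. The pairing exists by a cardinality count: if $E_A$ has size $\min\{|D|,|W_A^+|\}$, it either has full size $|D|\ge|E_B|$ or contains all acceptable students of $W$.
\item Add the leftover members of $E_A$, all of which are $A$-acceptable.
\end{enumerate}
Each step is an allowed responsive move, so the comparison holds for every responsive extension.

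The main obstacle is the bookkeeping in Claims 1 and 2 when acceptability truncates the blocks, so that $|\alpha|<q_A$ or $|\beta|<q_B$. In that case one must check that removing $D$ frees exactly $|D|$ positions, or fewer when truncated, and that no member of $\alpha$ can appear in $E_B$ or vice versa. I would handle this by writing $\best_t(Z,r)$ as the top $r$ elements of $Z_t^+$ and checking the two regimes ($|Z_t^+|\ge r$ or not) directly.
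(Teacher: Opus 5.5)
Your proof is correct, and it takes a different route from the paper. The truncation bookkeeping you flag goes through. If $|\alpha|<q_A$, then $\alpha=Z_A^+$, so $W_A^+=\emptyset$ and $E_A=\emptyset$. If $|\beta|<q_B$, then $E_B=\emptyset$. Otherwise, exactly $|D|$ positions are freed in each block. Hence $|E_A|=\min\{|D|,|W_A^+|\}$ and $|E_B|\le|D|$ always hold, and your pairing count is valid.

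The paper never computes the two choice sets. It transforms $v_{AB}$ step by step, via $\swap$ and $\relab$ operations, into
\[
v^*=(v,B^{q_B},A^{q_A^*},B^{q_A-q_A^*}).
\]
A boundary-invariance lemma (an exact condition for a swap to leave $\Ch$ unchanged, with relabeling invariant otherwise) gives $\Ch_{v^*}(X)=\Ch_{v_{AB}}(X)$. The paper then obtains $\Ch_{v_{AB}}(X)\succeq_B\Ch_{v_{BA}}(X)$ from Corollary~\ref{cor:repeatedly_apply}, since $v^*$ is $v_{BA}$ with its last $q_A-q_A^*$ slots relabeled $B$.

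You instead give a closed form for both outcomes. Both contain the core $P\cup\alpha\cup\beta$, and they differ only in at most $|\alpha\cap\beta|$ residual students from $W$: chosen $A$-optimally when $A$ moves last and $B$-optimally when $B$ moves last. This makes the ``earlier block absorbs commonly favored applicants'' intuition literal. It does not need Theorem~\ref{thm:category_slot_increasing}, and it makes the paper's special cases immediate: disjoint acceptable sets give $D=\emptyset$, and identical rankings give $E_A=E_B$. The paper's route buys the order $v^*$ and the count $q_A^*$, which it uses to interpret the order change as reallocating $q_A-q_A^*$ slots. It also buys an exact criterion for when a single boundary swap is outcome-invariant.

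One precision point concerns your reduction that drops $P$. It does not follow from responsiveness of $\succeq_A$ alone, since responsiveness does not determine arbitrary multi-element comparisons. It holds here because every move in your chain is a single-student deletion, replacement, or addition. Each such move remains a valid responsive move when $P$ is kept in the background, since $P\cap Z=\emptyset$. Equivalently, $S\mapsto P\cup S$ induces a responsive preference. So you should state it that way, or simply run the chain with $P$ included.
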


In the proof, to show that $v_{AB}$ is more desirable for criterion $B$ than $v_{BA}$, we repeatedly apply two operations, \emph{swap} and \emph{relabel}, to $v_{AB}$. In the single-capacity environment illustrated in Example~\ref{exa:single_college_swap_two_categories}, a swap corresponds to changing the precedence order from $(A, B)$ to $(B, A)$, whereas a relabel corresponds to changing the precedence order from $(A, B)$ to $(B, B)$. These operations are chosen so that the admitted set remains unchanged, while the precedence order is gradually transformed to approach $v_{BA}$.

For notational simplicity, we ignore the first $\bar{q}$ slots, which are occupied by an arbitrary but common precedence order $v$, and count the $(\bar{q}+1)$-th slot as the first slot. At each step $k=1,\ldots,q_B$, unless the procedure has already stopped, we apply either swap or relabel at the $k$-th slot
and replace the criterion $A$ placed at that slot with $B$.
Accordingly, at step $k$, the resulting precedence order can be written as
\begin{equation}\label{eq:wkm1_form}
w = \bigl(v,\underbrace{B,\ldots,B}_{k-1},\underbrace{A,\ldots,A}_{a'},\underbrace{B,\ldots,B}_{b'}\bigr).
\end{equation}
The values of $b'$, the number of $B$ slots in the final segment, and $a'$, the number of $A$ slots in the preceding segment, depend on how many times swap and relabel have been applied up to that point.

A swap is an operation that replaces the criterion $A$ placed at the $k$-th slot with criterion $B$, and at the same time replaces the criterion $B$ placed at the $(k+a')$-th slot with criterion $A$. This operation increases the number of criterion $B$ slots in the leading segment, while decreasing the number of criterion $B$ slots in the final segment. As a result, the total number of slots assigned to criteria $A$ and $B$ remains unchanged.

In contrast, a relabel simply replaces the criterion $A$ placed at the $k$-th slot with criterion $B$. This operation increases the number of criterion $B$ slots in the leading segment, and does not affect the final segment. Therefore, each application of relabel increases the total number of slots assigned to criterion $B$.

After applying the swap or relabel to $w$ (specified by \eqref{eq:wkm1_form}), the precedence order becomes the following:
\begin{align}
\swap(w,k)
&\coloneqq \bigl(v,\underbrace{B,\ldots,B}_{k},\underbrace{A,\ldots,A}_{a'},\underbrace{B,\ldots,B}_{b'-1}\bigr),\label{eq:wswap_def}\\
\relab(w, k)
&\coloneqq \bigl(v,\underbrace{B,\ldots,B}_{k},\underbrace{A,\ldots,A}_{a'-1},\underbrace{B,\ldots,B}_{b'}\bigr).\label{eq:wrelab_def}
\end{align}

In the proof, we propose the boundary invariance condition (Condition~\ref{condition:boundary_invariance}) as the necessary and sufficient condition for a swap not to change the admitted set. 
In Example~\ref{exa:single_college_swap_two_categories}, we showed that
the outcome is unchanged (i.e., $\Ch_{A, B}(X) = \Ch_{B, A}(X)$) in either of two cases: (i) the two criteria
have different top-ranked applicants, or (ii) they have the same
top-ranked applicant and the same second-ranked applicant.
The boundary invariance condition extends this insight to cases with larger capacities.

Conversely, we show that when the boundary invariance condition is not satisfied for precedence
order $w$ at the $k$-th slot, applying relabel does not change the admitted set: $\Ch_w(X) = \Ch_{\relab(w,k)}(X)$.
This corresponds to the case in Example~\ref{exa:single_college_swap_two_categories} in which (i) criteria $A$ and $B$ share the same most preferred student, while (ii) their second-most preferred students differ. In this situation, in the single-capacity case, the same student is admitted at the first slot regardless of whether the criterion is $A$ or $B$: $\Ch_{A, B}(X) = \Ch_{B, B}(X)$. An analogous argument applies in the case of larger capacities. When the boundary invariance condition is not satisfied, simply changing the label of the $k$-th slot from $A$ to $B$ does not change the admitted set. Therefore, in this case, we apply relabel rather than swap.

Starting from $v_{AB}$ and repeating this operation at most $q_B$ times, we obtain the following precedence order as the final outcome:
\begin{equation}\label{eq:vstar_form}
v^* = \bigl(v,\underbrace{B,\ldots,B}_{q_B},\underbrace{A,\ldots,A}_{q_A^*},\underbrace{B,\ldots,B}_{q_A-q_A^*}\bigr).
\end{equation}
By construction, we have $\Ch_{v^*}(X) = \Ch_{v_{AB}}(X)$. Moreover, $v_{BA}$ can be obtained from $v^*$ by replacing the $B$'s in the last $(q_A-q_A^*)$ slots with $A$. Therefore, by Corollary~\ref{cor:repeatedly_apply}, it follows that $\Ch_{v_{AB}}(X) = \Ch_{v^*}(X) \succeq_B \Ch_{v_{BA}}(X)$.

Explicitly constructing a precedence order $v^*$ that yields the same outcome as $v_{AB}$ is useful for quantifying the effect of changing the precedence order. Both $v^*$ and $q_A^*$ depend on the fixed applicant set $X$; we suppress this dependence to simplify notation. The difference between $v^*$ and $v_{BA}$ lies in the criteria assigned to the last $(q_A - q_A^*)$ slots, and this difference grows with each application of relabel. We apply swap whenever it leaves the admitted set unchanged, and use relabel only when swap would change the outcome. Thus, modifying the precedence order has an effect comparable to reallocating $q_A - q_A^*$ slots.

The magnitude of this effect depends on how criteria $A$ and $B$ rank the applicants. To illustrate, suppose that after the common prefix is processed, the remaining applicants consist of a common top group of size $m=q_A=q_B$ and a residual group of more than $m$ applicants. Suppose that both criteria find all of them acceptable, rank the common top group identically above the residual group, and rank the residual group in exactly opposite orders. The terminal $B$-block then does not select the $A$-best residual applicant at any step, so every operation is a relabeling: $q_A^*=0$ and $v^*=(v,B,\ldots,B)$. The residual-size qualification is essential. If at most $m$ applicants remain below the common top group, the terminal $B$-block can select all of them, swaps can occur, and $q_A^*$ need not be zero. In contrast, if the rankings are identical, or if the sets of applicants acceptable to $A$ and $B$ are disjoint, every swap is outcome-invariant. As a result, $q_A^*=q_A$ and $v^*=v_{BA}$. In general, it is difficult to predict the value of $q_A^*$. However, given the criteria and applicant set, the logic developed in our proof can be used to compute it efficiently.

Theorem~\ref{thm:category_utility_increasing} shows that, when a swap is applied to the last segment of the precedence order, the resulting admitted set becomes more favorable for the later-positioned criterion and less favorable for the earlier-positioned criterion. The following example demonstrates that this conclusion does not necessarily hold when the swap is not restricted to the last segment.

\begin{example}\label{exa:swap_single_college_three_categories}
Let $I = \{i_1, i_2, i_3, i_4\}$, $q = 3$, and $T = \{X, Y, Z\}$. The preferences of criteria $X, Y, Z$ are as follows:
\begin{align}
    \succ_X: & \: i_1, i_2, i_3, i_4;\\
    \succ_Y: & \: i_1, i_3, i_2, i_4;\\
    \succ_Z: & \: i_2, i_4, i_1, i_3.
\end{align}
In this case, $\Ch_{X, Y, Z}(I) = \{i_1, i_2, i_3\} \succ_X \Ch_{Y, X, Z}(I) = \{i_1, i_2, i_4\}$, implying that the admitted set becomes more favorable with respect to $\succ_X$ when criterion $X$ is positioned earlier.
\end{example}

Note that Example~\ref{exa:swap_single_college_three_categories} can be interpreted as a case of reserves for multiple targeted groups. Criterion $X$ can be interpreted as the open category, criterion $Y$ as a reserve for the targeted group $\{i_1,i_3\}$, and criterion $Z$ as a reserve for the targeted group $\{i_2,i_4\}$. Criteria $Y$ and $Z$ are assumed to (i) prioritize students in their respective targeted groups over non-targeted students, and (ii) rank students within the same group according to a common preference ranking $\succ_X$. Therefore, the insight that, with three or more criteria, a swap outside the last segment can either benefit or harm the later-positioned criterion continues to hold even under structures motivated by affirmative action or reserve policies for targeted groups.

Criteria $X$ and $Y$ both rank $i_1$ as their most preferred student. When the precedence order is $(Y,X,Z)$, criterion $X$ succeeds in having criterion $Y$ admit $i_1$ without using its own slot, and then admits its second-most preferred student $i_2$. However, the subsequent criterion $Z$ admits student $i_4$, who is the least preferred student from the perspective of criterion $X$.

In contrast, when the precedence order is $(X,Y,Z)$, criterion $X$ uses its own slot to admit $i_1$, and criterion $Y$ admits $i_3$, who is its second-most preferred student and the third-most preferred student of criterion $X$. At this stage, criterion $X$ is worse off by moving to an earlier position. Nevertheless, student $i_2$, who is displaced by this swap, is subsequently admitted by criterion $Z$. As a result, all three students most preferred by criterion $X$, namely $\{i_1,i_2,i_3\}$, are ultimately admitted.

The intuition of Example~\ref{exa:swap_single_college_three_categories} is similar to that of Example~\ref{exa:expansion_three}. When a swap is applied at an upstream segment in the precedence order, the criterion that is positioned later is better respected for the students admitted at those upstream slots. However, such a swap also changes which students are candidates for downstream slots. This, in turn, affects the set of students admitted downstream. Consequently, the overall effect of an upstream swap on the final admitted set cannot be determined in a uniform manner.

\section{Deferred Acceptance in Finite Markets}

In the single-college choice problem, we have shown that, when the order of two criteria in the final segment is changed, the criterion positioned later is advantaged. However, this result does not necessarily extend to a matching market with multiple colleges. The following example illustrates this point.

\begin{example}\label{exa:matching_market_two_categories}
Let $I = \{i_1, i_2, i_3\}$, $q = 2$, $T = \{A, B\}$, and $C = \{c_0, c_1\}$. The preference orders of students $i_1, i_2, i_3$, criteria $A$ and $B$, and college $c_1$ are as follows:
\begin{equation}
    \begin{array}{@{}l @{\qquad\qquad} l@{}}
    \begin{aligned}
        \succ_{i_1}: & \; c_0\\
        \succ_{i_2}: & \; c_1, c_0\\
        \succ_{i_3}: & \; c_0, c_1
    \end{aligned}
    &
    \begin{aligned}
        \succ_A:     & \; i_1\\
        \succ_B:     & \; i_2, i_1, i_3\\
        \succ_{c_1}: & \; i_3, i_2
    \end{aligned}
    \end{array}
\end{equation}
We can verify that $\mu(B, A) = \{i_1, i_2\} \succ_B \mu(A, B) = \{i_1, i_3\}$. Note that under $v = (B, A)$, college $c_1$ matches with student $i_3$, whereas under $v = (A, B)$, college $c_1$ matches with student $i_2$.
\end{example}

In Example~\ref{exa:matching_market_two_categories}, the admitted set of college $c_0$ is more favorable with respect to $\succ_B$ when the precedence order is $(B,A)$ than when it is $(A,B)$. This contrasts with the single-college setting, in which the criterion positioned later is always advantaged.

The underlying logic is similar to the reason why a college on the receiver side of the student-proposing DA may benefit from misreporting its preferences. When the precedence order is $(A,B)$, the set $\{i_1,i_3\}$ that applies in the first step is kept by college $c_0$ and becomes its admitted set. In contrast, when the precedence order is $(B,A)$, criterion $B$ keeps $i_1$ at the first slot, and then criterion $A$ has no student it wishes to keep at the second slot. Consequently, student $i_3$ is rejected by college $c_0$. At this point, the set of students kept by college $c_0$ is $\{i_1\}$, which is worse than $\{i_1,i_3\}$ obtained under the precedence order $(A,B)$, consistent with the single-college case.

In a matching market, however, a rejected student may apply to another college and trigger a rejection chain. Here, student $i_3$, who is rejected by college $c_0$, applies to college $c_1$ and is kept tentatively. This causes student $i_2$, who was previously matched with $c_1$, to be rejected. Student $i_2$ then applies to her second-choice college $c_0$.
Since criterion $B$ can keep $i_2$, the final admitted set becomes $\{i_1,i_2\}$, which is an improvement over $\{i_1,i_3\}$ obtained under $(A,B)$.

Thus, in a matching market, the precedence order affects which students are rejected, and these rejections can trigger rejection chains. Since myopically keeping the highest-ranked students need not lead to the best final matching outcome, a characterization such as Theorem~\ref{thm:category_utility_increasing} does not hold in the matching market.

\section{Large Markets}\label{sec:large_market}

Example~\ref{exa:matching_market_two_categories} identifies the only possible channel through which the
single-college comparison can be reversed in a matching market.  A change in the precedence
order changes which students are rejected by the focal college.  A rejected student may then
start a rejection chain, and the chain may eventually induce a student who did not previously
apply to the focal college to apply there.  This section first
shows that, in the absence of such feedback, the matching-market comparison is exactly the same as
the single-college comparison.  We then show that such feedback vanishes in a large market model.

For a precedence order $w$ of college $c_0$, let
\begin{equation}
    \mathcal A(w)
    \coloneqq
    \{i\in I:\text{$i$ proposes to a slot of $c_0$ during DA under $w$}\}
\end{equation}
be the set of students who ever apply to $c_0$.  Because every student ranks the slots of a given
college consecutively and in increasing slot order, a student belongs to $\mathcal A(w)$ if and only
if she proposes to the first slot of $c_0$. The following is immediate.

\begin{lem}\label{lem:proposal_set_representation}
For every precedence order $w$,
\begin{equation}
    \mu(w)=\Ch_{w}(\mathcal A(w)).
\end{equation}
Moreover, the student assigned to each slot of $c_0$ under $\bmu(w)$ is the student selected by
that slot in the sequential construction of $\Ch_{w}(\mathcal A(w))$.
\end{lem}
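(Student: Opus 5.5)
The plan is to read off the final DA state at $c_0$ and compare it with the sequential construction of $\Ch_w$ applied to $\mathcal A(w)$. I would argue slot by slot, by induction on the slot index $k=1,\ldots,q$. The claim to carry through the induction is the following: for each $k$, the student $\mu_k(w)$ holding slot $(c_0,k)$ at the end of DA equals $\topacc_{w(k)}$ of the set $\mathcal A(w)\setminus\{\mu_1(w),\ldots,\mu_{k-1}(w)\}$, with $\emptyset$ when that set has no $w(k)$-acceptable member. Once this holds for every $k$, the slot-by-slot statement is exactly the second sentence of the lemma. Taking the union over $k$ then gives $\mu(w)=\Ch_w(\mathcal A(w))$.

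For the induction step I will rely on two properties of DA in the slot market. First, student-proposing DA with responsive unit-capacity slots yields a matching that is stable with respect to the slot preferences $\succ_i'$ and slot priorities $\succ_{w(k)}$. Second, a slot's tentatively held student only improves over time in that slot's ranking. I would also use the proposal structure: since each student ranks the slots of $c_0$ consecutively in increasing index, a student in $\mathcal A(w)$ who ends up at a slot $(c_0,k')$ with $k'>k$, or who ends up outside $c_0$ below it, must have proposed to slot $(c_0,k)$ and been rejected there.

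Now fix $k$ and let $R_k=\mathcal A(w)\setminus\{\mu_1(w),\ldots,\mu_{k-1}(w)\}$. I would show two inequalities.

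\emph{(a) $\mu_k(w)$ is acceptable and belongs to $R_k$.} Acceptability is immediate because DA only holds acceptable proposers. Membership in $R_k$ holds because every student who proposes to $c_0$ lies in $\mathcal A(w)$, and $\mu_k(w)$ is distinct from the holders of the other slots.

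\emph{(b) No $i\in R_k$ acceptable to $w(k)$ satisfies $i\succ_{w(k)}\mu_k(w)$.} Such an $i$ applied to $c_0$ but is not matched to slots $1,\ldots,k-1$. Because $c_0$'s slots are consecutive in her list, she either ends at $(c_0,k')$ with $k'\ge k$ or ends outside $c_0$. In either case she proposed to $(c_0,k)$ at some point, so she prefers $(c_0,k)$ to her final assignment, or equals $\mu_k(w)$ itself. If $i$ and $\mu_k(w)$ were distinct with $i\succ_{w(k)}\mu_k(w)$, the pair would block the final matching, contradicting stability. Equivalently, slot $k$ rejected $i$ only because it held someone better, and its holder only improves thereafter.

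Parts (a) and (b) together identify $\mu_k(w)$ as $\topacc_{w(k)}(R_k)$, including the vacant case.

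The main obstacle is the delicate point in (b): a student $i$ proposes to slot $k$ only if she was rejected by slots $1,\ldots,k-1$, or was never admissible there. So I must check that every $i\in R_k$ actually reached slot $k$. The argument is that $i$ first proposes to slot $1$ and moves to slot $k'+1$ only after rejection at $k'$. Her final position is either some slot $\ge k$ of $c_0$, since she is not in slots $<k$ at the end, or outside $c_0$. If she was once held at some slot $j<k$, she was later rejected from it and continued downward, so in every case she passed through slot $k$. Stability then pins down the choice at slot $k$ and closes the induction.
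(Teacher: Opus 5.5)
Your proposal is correct and follows essentially the same route as the paper. Both argue slot by slot that every member of $\mathcal A(w)$ not held by slots $1,\ldots,k-1$ eventually proposes to $(c_0,k)$, so that slot ends DA holding its $w(k)$-best acceptable student among them, or stays vacant if there is none. Your explicit handling of students who are held at an earlier slot and later displaced, and your alternative justification via stability instead of monotonicity of the tentative holder, spell out what the paper's short argument leaves implicit.
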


Lemma~\ref{lem:proposal_set_representation} implies that, when the set of applicants is fixed, the admitted set of college $c_0$ is determined by the single-college choice problem. In such a case, the single-college comparison of Theorem~\ref{thm:category_utility_increasing} can be applied.

\begin{prop}\label{prop:common_applicant_set}
If $\mathcal A(v_{AB})=\mathcal A(v_{BA})$, then
\begin{equation}
    \mu(v_{BA})\succeq_A\mu(v_{AB})
    \qquad\text{and}\qquad
    \mu(v_{AB})\succeq_B\mu(v_{BA}).
\end{equation}
\end{prop}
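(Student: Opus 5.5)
The plan is to reduce the matching-market comparison to the single-college comparison of Theorem~\ref{thm:category_utility_increasing}, using Lemma~\ref{lem:proposal_set_representation} as the bridge. Write $X\coloneqq\mathcal A(v_{AB})=\mathcal A(v_{BA})$ for the common set of students who ever propose to $c_0$.

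First, I apply Lemma~\ref{lem:proposal_set_representation} to each of the two precedence orders. This gives $\mu(v_{AB})=\Ch_{v_{AB}}(\mathcal A(v_{AB}))=\Ch_{v_{AB}}(X)$ and $\mu(v_{BA})=\Ch_{v_{BA}}(\mathcal A(v_{BA}))=\Ch_{v_{BA}}(X)$. Here the hypothesis is used exactly once: it allows both choice functions to be evaluated on the same applicant set $X$.

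Second, I invoke Theorem~\ref{thm:category_utility_increasing} with this $X\subset I$. The orders $v_{AB}$ and $v_{BA}$ are precisely those defined in \eqref{eq:vAB} and \eqref{eq:vBA}, with a common prefix $v\in T^{\bar q}$. The theorem holds for every applicant set, and the single-college section's normalization that all students find $c_0$ acceptable plays no role. Indeed, $\Ch_w$ depends only on the criterion rankings, and students in $X$ have already proposed to $c_0$, so they find it acceptable. The theorem therefore yields $\Ch_{v_{BA}}(X)\succeq_A\Ch_{v_{AB}}(X)$ and $\Ch_{v_{AB}}(X)\succeq_B\Ch_{v_{BA}}(X)$. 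Substituting the identities from the first step gives the claim for every responsive extension $\succeq_A$ and $\succeq_B$.

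The only step requiring care is confirming that Lemma~\ref{lem:proposal_set_representation} applies verbatim to both orders and that nothing beyond the applicant set differs across the two DA runs that could affect $c_0$'s final class. This is true because $\Ch_w$ depends on its argument only through $X$ and the fixed criterion rankings. Beyond that, the argument is a direct substitution.
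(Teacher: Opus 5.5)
Your proposal is correct and follows essentially the same route as the paper: set $X=\mathcal A(v_{AB})=\mathcal A(v_{BA})$, use Lemma~\ref{lem:proposal_set_representation} to write both admitted sets as $\Ch_{v_{AB}}(X)$ and $\Ch_{v_{BA}}(X)$, and then apply Theorem~\ref{thm:category_utility_increasing} to this fixed applicant set. Your added remark that the acceptability normalization of the single-college section plays no role is accurate, because $\Ch_w$ depends only on the criterion rankings.
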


We next define an auxiliary transition process. Fix an arbitrary strict order
$\triangleright$ on $I$. Consider two precedence orders $w,w'$ of
$c_0$, holding all other primitives fixed, and put
\begin{equation}
\begin{alignedat}{2}
    X&\coloneqq\mathcal A(w),\qquad
    &D&\coloneqq\Ch_{w'}(X),\\
    R(w,w')&\coloneqq\mu(w)\setminus D,\qquad
    &N(w,w')&\coloneqq D\setminus\mu(w).
\end{alignedat}
\end{equation}
We construct the initial state of the process from $\bmu(w)$ as follows.
First, remove the students assigned to the slots of $c_0$ under $\bmu(w)$.
Next, apply the choice procedure associated with $w'$ to the applicant set
$X$. For each $k=1,\ldots,q$, if criterion $w'(k)$ selects a student at step
$k$ of this procedure, assign that student to slot $(c_0,k)$; otherwise leave
slot $(c_0,k)$ vacant. The set of students assigned to $c_0$ in this way is
$D=\Ch_{w'}(X)$. Thus, a student in $\mu(w)\cap D$ remains assigned to $c_0$
(possibly at a different slot), a student in $R(w,w')$ is released from
$c_0$, and a student in $N(w,w')$ is newly assigned to $c_0$.

If a student $i\in N(w,w')$ was assigned under $\bmu(w)$ to a slot $s$
outside $c_0$, remove $i$ from $s$ and place at $s$ an inactive placeholder,
denoted by $i^*$. The placeholder is not an agent and has no preference or assignment of its
own. Operationally, a student $j$ displaces $i^*$ at slot $s$ if and only if
$j\succ_s i$; otherwise, slot $s$ rejects $j$. Its sole purpose is to
preserve the priority cutoff that slot $s$ had under
$\bmu(w)$. If $i$ was unmatched under $\bmu(w)$, no placeholder is created.
Assignments at all other slots outside $c_0$ are initially unchanged. We call
an actual student assigned to a slot outside $c_0$ its \emph{real holder}, to
distinguish that student from a placeholder.

Take the students in $R(w,w')$ one at a time in
$\triangleright$-order. For each such student $i$, designate $i$ as the
\emph{active student}. Student $i$ considers, in descending order of her
refined preferences, all slots that she finds acceptable and ranks strictly
below every slot of $c_0$, and proposes to the highest-ranked such slot $s$.

Whenever an active student $j$ proposes to a slot $s$, if $\emptyset\succ_s j$, then $s$ rejects $j$. Suppose instead
that $j\succ_s\emptyset$. If $s$ is vacant, it holds $j$. If $s$ has a real
holder $h$, then it rejects $j$ when $h\succ_s j$; when $j\succ_s h$, it holds
$j$ and displaces $h$. If $s$ has a placeholder $\ell^*$, then it rejects $j$
when $\ell\succ_s j$; when $j\succ_s\ell$, it replaces $\ell^*$ with $j$.

After a rejection, the same active student proposes to her
next lower-ranked acceptable slot. If a real holder $h$ is displaced from a
slot $s$ outside $c_0$, then $h$ becomes the new active student and proposes,
in descending order of her refined preferences, slots below
$s$, beginning with the highest-ranked one. Call this slot $s'$ and repeat the same procedure as in the case of $s$ above. If a placeholder $\ell^*$ is
replaced, the chain ends. A student who is held at a vacant slot, or who replaces a
placeholder, makes no further proposal in the current chain. The chain also
ends if its active student has no lower-ranked acceptable slot remaining.

If the first slot of $c_0$ would be the next slot proposed
by an active student in $I\setminus X$, the process stops before that proposal
is made. Note that an
active student in $X$ always starts or resumes strictly below the entire block
of $c_0$ because of the following: A student in $R(w,w')$ starts below the entire block of
$c_0$. A student in $X\setminus(D\cup\mu(w))$ was rejected by every slot of
$c_0$ under the baseline DA and
can first become active only by being displaced from her
baseline assignment outside $c_0$; she then resumes immediately below that
assignment, which is itself below the entire block of $c_0$. Finally, a student in
$D$ is assigned to $c_0$ and does not become active when her placeholder is
displaced. Hence no student in $X$ can newly propose to the first slot of $c_0$ in
this process; only a student in $I\setminus X$ can do so.

The resulting sequence of proposals and displacements,
beginning with $i$ and continuing until one of the preceding stopping
conditions is met, is called the \emph{displacement chain initiated by $i$}.
After a chain ends, the process starts the chain of the next student in
$R(w,w')$, if any, from the assignment and placeholder configuration left by
the preceding chains.

The auxiliary \emph{$(w,w')$-transition process returns to $c_0$}
if, during a displacement chain, the first slot of $c_0$ is the next slot to
which an active student in $I\setminus X$ would propose under the procedure.
If every displacement chain ends without such a proposal becoming the next
proposal, the process does not return.

\begin{lem}\label{lem:no_return_inclusion}
If the $(w,w')$-transition process does not return to $c_0$, then
\begin{equation}
    \mathcal A(w')\subset \mathcal A(w).
\end{equation}
\end{lem}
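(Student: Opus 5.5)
The plan is to show that, when the process does not return, the terminal configuration of the $(w,w')$-transition process is itself a stable matching of the market under $w'$. We then compare it with $\bmu(w')$ and argue that DA under $w'$ never sees a proposal to $c_0$ from outside $X$.

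Let $\bm\nu$ be the assignment obtained when every displacement chain has ended without returning. A placeholder $\ell^*$ still in place marks a slot that is actually vacant in $\bm\nu$. Students in $D$ are assigned to $c_0$ at the slots produced by $\Ch_{w'}(X)$, and everyone else is at their final position in the process.

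Rather than stability, it is cleaner to set up a monotonicity/cutoff argument directly on the DA proposal sequence under $w'$.

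\textbf{Step 1 (the cutoff invariant).} Define $F$ as the set of pairs $(j,s)$ with $s$ outside $c_0$ such that $j$ is rejected at $s$ either in the baseline $\bmu(w)$ run or during the process. The invariant to prove is this: every $j$ who proposes to $s$ and is rejected in either run is a student $s$ ranks below the real holder or placeholder-owner at $s$ in the final configuration. Slot cutoffs only rise along DA and along the process, because a placeholder keeps the baseline cutoff. Hence each such rejection is also justified in the final configuration.

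\textbf{Step 2 (a proposal schedule for DA under $w'$).} Since DA's outcome is independent of proposal order, build one particular order for DA under $w'$. First replay the baseline DA under $w$, but with $c_0$'s slots using $w'$. Then run the displacement chains.

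The key claim is that the set of students proposing to $c_0$ in this schedule is exactly $X$. Students in $X$ proposed to $c_0$ in the baseline. By the observation preceding the lemma, no student in $X$ re-proposes to $c_0$'s first slot during the process. The non-return hypothesis says that no student of $I\setminus X$ ever reaches $c_0$.

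Because $c_0$'s choice under $w'$ is path-independent on the set it faces, a sequential slot-specific choice with consecutive slots, its held set at the end is $\Ch_{w'}(X)=D$. This matches the process's initialization.

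\textbf{Step 3 (the main obstacle).} The difficulty is that replaying the baseline under $w'$ at $c_0$ differs from the baseline: $c_0$ rejects or holds different students at intermediate times. Students in $N$ were held elsewhere in the baseline, and the placeholder device exists precisely to emulate their never leaving $c_0$ under $w'$.

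I would handle this by not literally replaying. Instead I would verify that the final $\bm\nu$, with placeholders deleted, is stable under $w'$ and that every student weakly prefers $\bmu(w')$ in the appropriate direction. Concretely, I would show that $\bm\nu$ is stable under $w'$: blocking pairs outside $c_0$ are excluded by the cutoff invariant of Step 1, and blocking pairs at $c_0$ can only involve students who proposed to $c_0$, i.e.\ members of $X$, for whom $\Ch_{w'}(X)=D$ rules out blocking.

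Next, note that in $\bm\nu$ every student in $I\setminus X$ holds a slot weakly below her baseline slot, yet still strictly above $c_0$ whenever she did not apply to $c_0$ at baseline, because the process stopped before she would have proposed to $c_0$. Student-optimality of $\bmu(w')$ then gives $\bmu(w')(i)\succeq_i'\bm\nu(i)\succ_i' c_0$ for every $i\notin X$. In DA the set of slots a student proposes to is exactly those weakly above her final assignment, so no $i\notin X$ ever proposes to $c_0$ under $w'$. This yields $\mathcal A(w')\subset X=\mathcal A(w)$.

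The delicate verification is stability at slots still holding placeholders, which are truly vacant. A student $j$ rejected there must be unacceptable or rank below the placeholder owner $\ell$. If $j$ ranks below $\ell$ but $j$ is acceptable, the vacant slot could be blocked. I would address this by checking that any such $j$ ends at a slot she prefers or at a slot reached only after that rejection, and if necessary by weakening the comparison. The fallback is to use the Gale–Shapley lattice bound in its weaker form: DA outcomes are student-optimal among matchings with no blocking pair by a student against a slot that rejected her with justification. Rejections caused by placeholders are cutoff-justified because $\ell$ herself is at $c_0$ in $w'$'s eventual configuration, and this is what I expect to be the crux to nail down.
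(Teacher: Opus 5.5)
Your overall architecture matches the paper's proof. You take the terminal configuration $\bm\nu$ with placeholders deleted, rule out blocking at $c_0$ via $D=\Ch_{w'}(X)$ and at outside slots via rising cutoffs, use non-return to keep every $i\notin X$ strictly above $c_0$, and finish with student-optimality of $\bmu(w')$. You rightly abandon the replay schedule of your Step 2.

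\textbf{The gap.} The gap is the one you flag at the end and do not close: $\bm\nu$ is in general \emph{not} stable under $w'$, so student-optimality cannot be applied to it directly. Your cutoff invariant compares a rejected student only with the holder \emph{or placeholder owner}. But a slot whose baseline holder $\ell\in N$ has moved to $c_0$ is actually vacant in $\bm\nu$.

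Here is a concrete instance. Take students $i_1,\ell,j$. College $c_0$ has $q=2$, with $\succ_A: i_1,\ell$ and $\succ_B: i_1$. A one-seat college $c_1$ ranks $\ell, j$. Students' lists are $\succ_{i_1}: c_0$, $\succ_\ell: c_0,c_1$, and $\succ_j: c_1$. Let $w=(A,B)$ and $w'=(B,A)$.
\begin{itemize}
\item Then $X=D=\{i_1,\ell\}$, $N=\{\ell\}$, and $R=\emptyset$.
\item No chain is run, so the process does not return.
\item $\bm\nu$ leaves $c_1$ vacant while $j$ is unmatched, so $(j,c_1)$ blocks.
\end{itemize}
Your first repair, that such a $j$ ends at a slot she prefers, is therefore false. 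The remark that the rejection is ``justified'' because $\ell$ is at $c_0$ does not help either: $\ell$ is not at $c_1$, and the pair genuinely blocks.

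\textbf{The missing step.} You need an argument that blocking pairs at \emph{vacant} slots are harmless for the comparison; the paper supplies it in its Step~2.
\begin{itemize}
\item Start from $\bm\nu$, which is individually rational and has no blocking pair involving an occupied slot.
\item Repeatedly take a vacant slot that lies in a blocking pair. Assign it to the highest-priority acceptable student who strictly prefers it, and vacate her old slot.
\item Each move preserves ``no occupied slot is blocked,'' strictly improves one student, and harms no one.
\item The procedure therefore terminates in a stable matching $\nu'$ with $\nu'(i)\succeq_i'\bm\nu(i)$ for every $i$.
\item Hence $\bmu(w')(i)\succeq_i'\nu'(i)\succeq_i'\bm\nu(i)\succ_i'(c_0,k)$ for every $i\notin X$ and every $k$.
\end{itemize}
Equivalently, you could cite that the student-optimal stable matching weakly dominates every individually rational matching without justified envy. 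Your ``weaker form'' as stated is not that result.

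With this step added your proof goes through. One further point: your outside-slot argument also needs the fact that every $i$ with $s\succ_i'\bm\nu(i)$, and with $s$ mutually acceptable, actually proposed to $s$ in the baseline or in a chain. This holds because students in $D$ proposed to everything above $c_0$ in the baseline, and active students' proposal positions only move downward.
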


Applying Lemma~\ref{lem:no_return_inclusion} in both directions gives the deterministic result
that will be used below.

\begin{thm}\label{thm:no_feedback_order_comparison}
Suppose that neither the $(v_{AB},v_{BA})$-transition process nor the
$(v_{BA},v_{AB})$-transition process returns to $c_0$.  Then
\begin{equation}
    \mu(v_{BA})\succeq_A\mu(v_{AB})
    \qquad\text{and}\qquad
    \mu(v_{AB})\succeq_B\mu(v_{BA}).
\end{equation}
\end{thm}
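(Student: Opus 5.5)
The plan is to reduce the statement to Proposition~\ref{prop:common_applicant_set} by showing that the two applicant sets coincide. First I apply Lemma~\ref{lem:no_return_inclusion} with $(w,w')=(v_{AB},v_{BA})$. Its hypothesis is exactly the assumption that the $(v_{AB},v_{BA})$-transition process does not return to $c_0$, so it gives $\mathcal A(v_{BA})\subset\mathcal A(v_{AB})$. Next I apply the same lemma with the roles reversed, $(w,w')=(v_{BA},v_{AB})$. The second no-return assumption then yields $\mathcal A(v_{AB})\subset\mathcal A(v_{BA})$. Together the two inclusions give $\mathcal A(v_{AB})=\mathcal A(v_{BA})$.

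With a common applicant set in hand, Proposition~\ref{prop:common_applicant_set} delivers both comparisons directly: $\mu(v_{BA})\succeq_A\mu(v_{AB})$ and $\mu(v_{AB})\succeq_B\mu(v_{BA})$. To be explicit about what that proposition uses, write $X$ for the common set. Lemma~\ref{lem:proposal_set_representation} gives $\mu(v_{AB})=\Ch_{v_{AB}}(X)$ and $\mu(v_{BA})=\Ch_{v_{BA}}(X)$. Theorem~\ref{thm:category_utility_increasing}, applied to this fixed $X$, then yields the two comparisons.

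I do not expect a real obstacle here, since all the substantive work sits in Lemma~\ref{lem:no_return_inclusion} and Theorem~\ref{thm:category_utility_increasing}. The one point that needs care is that Lemma~\ref{lem:no_return_inclusion} is stated for an arbitrary ordered pair $(w,w')$ of precedence orders of $c_0$, with all other primitives held fixed. That generality is what allows it to be invoked symmetrically in both directions. If the lemma's proof relied on some asymmetry between $w$ and $w'$, such as the terminal block structure, I would check that nothing of that kind is used. As stated, the lemma involves only the construction of the transition process from $\bmu(w)$ and $\Ch_{w'}(\mathcal A(w))$, which is fully symmetric in the roles of the two orders, so both applications are valid.
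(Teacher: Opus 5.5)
Your proposal is correct and follows the paper's proof exactly: apply Lemma~\ref{lem:no_return_inclusion} in each direction to get $\mathcal A(v_{AB})=\mathcal A(v_{BA})$, then invoke Proposition~\ref{prop:common_applicant_set}. One wording fix: the transition construction is not symmetric in $w$ and $w'$, since it starts from $\bmu(w)$ and $\Ch_{w'}(\mathcal A(w))$; what you need, and what holds, is that the lemma is stated for an arbitrary ordered pair of precedence orders and can therefore be applied with the roles exchanged.
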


We now embed the preceding deterministic result in the random-market model of
\citet{kojima2009incentives}.  A random market is a tuple
\begin{equation}
    \widetilde\Gamma=
    \left(C,I,(q_c)_{c\in C},
    \bigl(T_c,(\succ_{c,t})_{t\in T_c}\bigr)_{c\in C},
    k,\mathcal D,\mathbf v\right),
\end{equation}
where $k$ is the length of every student's acceptable-college list,
$\mathcal D=(p_c)_{c\in C}$ is a distribution with $p_c>0$ for every $c$, and $\mathbf v$ is a
profile of precedence orders. Each student independently draws $k$ distinct colleges
sequentially from $\mathcal D$, as in \citet{kojima2009incentives}, and ranks the slots of each
drawn college consecutively in that college's precedence order.

A sequence $(\widetilde\Gamma^n)_{n\geq 1}$, with $|C^n|=n$, is \emph{regular} if there are
constants $k\in\mathbb Z_{++}$ and $\bar q\in\mathbb Z_{++}$ such that, for every $n$,
\begin{enumerate}
    \item every student lists exactly $k$ colleges;
    \item $q_c\leq\bar q$ for every $c\in C^n$;
    \item $|I^n|\leq\bar q n$; and
    \item every slot finds every student acceptable.
\end{enumerate}
The criteria's rankings may otherwise be arbitrary and may vary with $n$.

For the uniform result concerning a specified college, we use the sufficient-thickness condition
of \citet{kojima2009incentives}.  Let $p_{\max}^n=\max_{c\in C^n}p_c^n$, let
$N_c^n$ be the number of students whose complete $k$-college list contains $c$, and define
\begin{equation}
    V_T^n
    \coloneqq
    \left\{c\in C^n:\frac{p_{\max}^n}{p_c^n}\leq T
    \text{ and }N_c^n<q_c\right\},
    \qquad
    Y_T^n\coloneqq |V_T^n|.
\end{equation}
The sequence is \emph{sufficiently thick} if $\mathbb E[Y_T^n]\to\infty$ for some finite $T$.

For each college $c$, fix two local criteria $A_c,B_c\in T_c$ and two precedence orders
$v_{A_cB_c}^c$ and $v_{B_cA_c}^c$ that have a common prefix and differ only by reversing their
final $A_c$- and $B_c$-blocks, as in
Theorem~\ref{thm:category_utility_increasing}, and let $\mu_c^n(w)$ be the set of students matched with college $c$ when its precedence order is $w$.  All other colleges' precedence orders are held
fixed when the two outcomes for college $c$ are compared.  Let $\mathcal G_c^n$ be the event
\begin{equation}
    \mathcal G_c^n=\left\{
    \mu_c^n(v_{B_cA_c}^c)\succeq_{c,A_c}\mu_c^n(v_{A_cB_c}^c)
    \text{ and }
    \mu_c^n(v_{A_cB_c}^c)\succeq_{c,B_c}\mu_c^n(v_{B_cA_c}^c)
    \right\},
\end{equation}
where $\succeq_{c,t}$ is the responsive set preference associated with $\succ_{c,t}$.

\begin{thm}\label{thm:large_market_order_comparison}
Suppose $(\widetilde\Gamma^n)_{n\geq1}$ is regular.
\begin{enumerate}
    \item\label{part:1_large_market_order_comparison} The expected fraction of colleges at which at least one of the two inequalities fails converges to zero as $n\rightarrow\infty$:
    \begin{equation}
        \frac{1}{n}\sum_{c\in C^n}\Pr[(\mathcal G_c^n)^c]\longrightarrow 0.
    \end{equation}

    \item\label{part:2_large_market_order_comparison} If the sequence is also sufficiently thick, then the conclusion is uniform over
    colleges:
    \begin{equation}
        \sup_{c\in C^n}\Pr[(\mathcal G_c^n)^c]\longrightarrow 0.
    \end{equation}
    In particular, for any specified sequence of colleges $(c_0^n)_{n\geq1}$, the probability that either inequality fails converges to zero.
\end{enumerate}
\end{thm}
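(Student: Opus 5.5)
By Theorem~\ref{thm:no_feedback_order_comparison}, the event $(\mathcal G_c^n)^c$ is contained in the union of two events: the $(v^c_{A_cB_c},v^c_{B_cA_c})$-transition process returns to $c$, or the $(v^c_{B_cA_c},v^c_{A_cB_c})$-transition process returns to $c$. So it suffices to bound, for each ordered pair $(w,w')$ of the two orders, the probability that the $(w,w')$-transition process returns to $c$. Both parts then follow by summing two terms. The plan is to import the rejection-chain bounds of \citet{kojima2009incentives}. This works after two reductions.

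\textbf{First reduction: few chains are started.} The number of displacement chains is $|R(w,w')|\le q_c\le\bar q$. Each chain is run from the configuration $\bmu(w)$, modified by removals and placeholders. Every other college is represented as a collection of unit-capacity colleges ranked consecutively, so the chain is an ordinary rejection chain in a one-to-one market with slot priorities. A placeholder only preserves a cutoff, and replacing it ends the chain. A chain therefore terminates in one of three ways:
\begin{enumerate}
\item it reaches a vacant slot;
\item it exhausts a student's list;
\item it replaces a placeholder;
\end{enumerate}
or else it hits the first slot of $c$ through a student in $I\setminus X$. A return requires the last case. As in the Kojima--Pathak argument, I will use the principle of deferred decisions. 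A student displaced from a college has not yet revealed the colleges below it on her list; conditional on the history, these are drawn from $\mathcal D$ restricted to the colleges not already drawn. Since the list length is fixed at $k$, each chain involves at most $k\,|I^n|$ proposals in principle. The key estimate, however, is that each newly drawn college is $c$ with probability at most roughly $p_c/(1-(k-1)p_{\max})$. Meanwhile, each draw lands at a college with a vacant slot, which ends the chain, with probability bounded below by the vacancy mass.

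\textbf{Second reduction: the probability of a return.} Regularity gives $|I^n|\le\bar q n$ and $k$ fixed. By Kojima--Pathak's counting argument, the expected number of colleges with vacant slots (for instance, with $N_c<q_c$) is of order $n$. Hence each fresh proposal in a chain terminates it with probability bounded away from zero, except on an event of vanishing probability. The expected chain length is then $O(1)$, and the chance that the chain hits $c$ is $O(p_c\cdot\text{length})$. For part~\ref{part:1_large_market_order_comparison}, I will average over $c$. Summing over $c$ gives $\sum_c\Pr[\text{return to }c]\le \bar q\cdot O(1)\cdot\sum_c p_c+o(n)=o(n)$. Alternatively, I will adapt Kojima--Pathak's bound on the expected number of colleges for which a deviation matters, which is $o(n)$. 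Dividing by $n$ gives the claim.

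\textbf{Part~\ref{part:2_large_market_order_comparison} and the main obstacle.} For part~\ref{part:2_large_market_order_comparison}, I will use sufficient thickness as in their Theorem 2. Each proposal lands at a college in $V_T^n$ with probability at least proportional to $Y_T^n p_{\min}$-type mass, relative to the probability $p_c\le p_{\max}\le T p_{c'}$ of landing at $c$. The ratio of returning to $c$ before reaching a vacant college in $V_T^n$ is then at most of order $T/\mathbb E[Y_T^n]\to0$, uniformly in $c$, up to concentration of $Y_T^n$. Multiplying by the at most $2\bar q$ chains gives the uniform bound. The main obstacle is verifying that the placeholder-modified transition process is genuinely covered by the deferred-decisions argument. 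Two points need care. Proposals by students in $X$ never return, as was noted before Lemma~\ref{lem:no_return_inclusion}. Displaced students must also have unrevealed lists below their current assignment; this needs care because $\bmu(w)$ itself is correlated with the draws. I would first try to resolve this by generating the DA under $w$ and then the transition process jointly in one deferred-decision run, as Kojima--Pathak do.
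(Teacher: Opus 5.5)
Your reduction to the two directed return events, the count of at most $\bar q$ chains per direction, and the ratio idea for part~2 all match the paper. The argument you actually give for part~1, however, does not go through. You pass from ``the expected number of colleges with vacant slots is of order $n$'' to ``each fresh draw ends the chain with probability bounded away from zero.'' From there you get $O(1)$ chain length and a return probability of $O(p_c\cdot\text{length})$.

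Regularity places no restriction on $\mathcal D$ beyond $p_c>0$. The vacant colleges are typically the unpopular ones, so a linear number of them can carry vanishing mass. For instance, suppose $n/\log n$ colleges share mass $1-1/\log n$ and the rest share $1/\log n$. Then the popular tier is oversubscribed and full, a fresh draw reaches a vacancy with probability only about $1/\log n$, and nothing keeps chains short.

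The per-draw bound fails for a related reason. The conditional probability of drawing $c$ is $p_c/(1-\sum_{d\in S_i}p_d)$, and this denominator is not bounded below when a few colleges carry most of the mass; your $1-(k-1)p_{\max}$ can even be negative. So neither ingredient of $\sum_c\Pr[\text{return to }c]\le O(1)\sum_c p_c$ is available. Your fallback, ``adapt Kojima--Pathak,'' leaves exactly this step unspecified.

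The missing idea is a race bound that does not depend on chain length or on the normalization:
\begin{enumerate}
\item Index colleges by decreasing popularity, and let $U$ be the colleges in $\{1,\ldots,c\}$ never exposed during baseline DA.
\item Such colleges are entirely vacant, so reaching one absorbs the chain, and each chain uses up at most one of them.
\item Condition on the carrier's next new draw lying in $\{c\}\cup U$. By the deferred-decision rule, the probability that this draw is $c$ is $p_c/(p_c+\sum_{d\in U}p_d)\le 1/(|U|+1)$, because $p_d\ge p_c$ on $U$.
\item Hence each chain returns with probability at most $1/(|U|+1)$, however long it runs.
\end{enumerate}
You then need $|U|$ large with high probability. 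The paper gets this from $\mathbb E[Y_c^n]\ge\tfrac c2\exp(-8\bar q nk/c)$ together with a variance bound, and this diverges only for $c>c^*(n)=16\bar q nk/\log(\bar q n)$. The top $c^*(n)=o(n)$ colleges receive the trivial bound $1$, which is precisely why part~1 is an average rather than a uniform statement.

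For part~2 your ratio argument is the right one, but the absorbers must come from a set determined by the baseline record, such as $\hat V_T^n(w)=\{d:p_{\max}^n/p_d^n\le T,\ \hat N_d^n(w)<q_d\}\supseteq V_T^n$. Conditioning on membership in $V_T^n$ itself constrains unexposed list entries; with $q_d=1$ it says no student lists $d$. That breaks the deferred-decision rule your ratio bound relies on.
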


Theorem~\ref{thm:large_market_order_comparison} completes the
large-market argument. A later-positioned criterion may be disadvantaged only when a transition process returns to
the originating college in at least one direction and thereby alters the set
of students who apply there.  Regularity makes such return events negligible
for all but a vanishing fraction of colleges in expectation. Although regularity does
not by itself guarantee the comparison for an ex ante specified college, sufficient thickness strengthens the result by providing enough absorbing
colleges uniformly across the market, so that the return probability vanishes
for every college.  Once such feedback is absent, the two
precedence orders face the same applicant set, and the results established for the single-college choice problem apply.

\section{Boston Mechanism}\label{sec:boston}

All matching-market results up to this point have used student-proposing DA.
For that mechanism, treating a multi-slot college's slots that share the
same criterion as consecutively ranked unit-capacity objects preserves
the outcome.
The analogous transformation does not generally
preserve the outcome of the Boston mechanism: if the slots in one college were treated as separate colleges, a student rejected
by one internal slot would propose to the next internal slot only in a later round.  We therefore
define the Boston mechanism directly at the level of the original colleges.

Fix submitted rank-order lists over colleges.  In round $r$, each unassigned student applies to her $r$-th ranked college.  Each college considers the applications received in that round and
processes all of its currently vacant slots in their precedence order.  Each vacant slot permanently
accepts its most-preferred acceptable applicant among the round-$r$ applicants not already
accepted by an earlier vacant slot of the same college.  Applicants not accepted in round $r$
remain unassigned and proceed to the next round.  All acceptances are final.  For a precedence
order $w$ of $c_0$, let $\bmu^{\mathrm{BM}}(w)$ denote the resulting matching and let
$\mu^{\mathrm{BM}}(w)$ denote the set of students assigned to $c_0$.

At the focal college $c_0$, we impose the \emph{criterion-independent acceptability condition}. We have criterion-independent acceptability if for every student $i$ and any two criteria $t,t'$ used by $c_0$, $i \succ_t \emptyset$ if and only if $i \succ_{t'} \emptyset$. The criteria may rank acceptable students differently, but they agree on which students are
acceptable.  The universal-acceptability assumption used in the random-market analysis of
Section~\ref{sec:large_market} is a special case.

For $h\in\{0,\ldots,q\}$, write
\begin{equation}
    w^{-h}\coloneqq (w(h+1),\ldots,w(q))
\end{equation}
be the suffix obtained by deleting the first $h$ slots.  As in
Section~\ref{sec:large_market}, $\Ch_{w^{-h}}(X)$ denotes the set selected by
processing the slots in $w^{-h}$ sequentially from $X$.

The following lemma extends the single-college comparison to a situation in which the first $h$
seats have already been filled by the same set under the two precedence orders.

\begin{lem}\label{lem:bm_truncated_order}
Let $P$ and $X$ be disjoint sets of students such that $|P|=h$ and every student in $X$ is acceptable
to every criterion of $c_0$.  Then
\begin{equation}
\begin{split}
    P\cup\Ch_{v_{BA}^{-h}}(X)
    &\succeq_A
    P\cup\Ch_{v_{AB}^{-h}}(X),\\
    P\cup\Ch_{v_{AB}^{-h}}(X)
    &\succeq_B
    P\cup\Ch_{v_{BA}^{-h}}(X).
\end{split}
\end{equation}
\end{lem}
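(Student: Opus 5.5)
The plan is to reduce the statement to Theorem~\ref{thm:category_utility_increasing} applied to an auxiliary applicant set, and then add back the common set $P$. Adding $P$ is harmless for the following reason. The paper notes that every set comparison from the single-college results is obtained through a finite sequence of improving replacements, additions of acceptable students, and deletions of unacceptable students. Because $P$ is disjoint from $X$, and hence from every set chosen from $X$, each step of such a sequence remains a valid responsive step after $P$ is added to both sides. So it suffices to prove the two comparisons for $\Ch_{v_{BA}^{-h}}(X)$ versus $\Ch_{v_{AB}^{-h}}(X)$ directly; $P$ carries no information beyond $h$.

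\emph{Case $h\le\bar q$.} Write $u$ for the part of the prefix $v$ remaining after its first $h$ entries. Then $v_{AB}^{-h}=(u,A^{q_A},B^{q_B})$ and $v_{BA}^{-h}=(u,B^{q_B},A^{q_A})$. These two orders differ only by reversing their terminal blocks, so Theorem~\ref{thm:category_utility_increasing} applies verbatim with prefix $u$ and applicant set $X$.

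\emph{Case $h=\bar q+d$ with $d\ge 1$.} Here the two suffixes are no longer block reversals of each other. For instance, if $d\le\min\{q_A,q_B\}$, then $v_{AB}^{-h}=(A^{q_A-d},B^{q_B})$ and $v_{BA}^{-h}=(B^{q_B-d},A^{q_A})$. I plan to handle this with a padding trick. Introduce $d$ fictitious students $D$ and extend $\succ_A$ and $\succ_B$ so that both criteria rank all of $D$ above every member of $X$, with all of them acceptable. Under $(A^{q_A},B^{q_B})$ applied to $X\cup D$, the first $d$ $A$-slots take exactly $D$. The remaining slots then act on $X$ as $(A^{q_A-d},B^{q_B})$ does. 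Symmetrically, $(B^{q_B},A^{q_A})$ on $X\cup D$ selects $D\cup\Ch_{(B^{q_B-d},A^{q_A})}(X)$. Theorem~\ref{thm:category_utility_increasing}, applied with empty prefix to $X\cup D$, gives the comparison between $D\cup\Ch_{v_{BA}^{-h}}(X)$ and $D\cup\Ch_{v_{AB}^{-h}}(X)$. Removing the common set $D$ preserves the comparison by the same responsive-step argument used for $P$ (the comparisons hold for every responsive extension, including one on the enlarged student set).

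When $d$ exceeds one of the blocks, the construction must be adjusted. Say $q_A<d$. Then $v_{AB}^{-h}=(B^{q-h})$ while $v_{BA}^{-h}=(B^{q_B-d},A^{q_A})$ with $q_B-d\ge0$. The padding trick still works if I pad with $d$ students ranked top by both criteria. In the $AB$ order, the $A$-block absorbs $q_A$ of them and the first $d-q_A$ $B$-slots absorb the rest, so the continuation on $X$ is $(B^{q_B-(d-q_A)})=(B^{q-h})$. In the $BA$ order, the $B$-block absorbs all $d$ of them. So the same argument goes through. The case $q_B<d$ is symmetric.

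The main obstacle is justifying this padding argument cleanly. Two points need care. First, I must check that the slot-by-slot selections really coincide with the truncated procedures. This holds because $D$ is top-ranked by both criteria, so whichever criterion is processed first consumes $D$ before touching $X$. Second, I must check that the extended criterion rankings and responsive preferences are legitimate, and that removing $D$ preserves the comparison. This is where the paper's remark that comparisons arise from chains of elementary responsive steps is essential. If that remark proves delicate for deletions of acceptable students common to both sides, I would instead argue directly via the swap/relabel construction in the proof of Theorem~\ref{thm:category_utility_increasing} started from the state after $D$ is consumed. Criterion-independent acceptability of $X$ is used only to ensure that dummy-absorption counts and block fillings behave identically across the two orders.
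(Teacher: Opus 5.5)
Your argument is correct, but for $h>\bar q$ it takes a different route from the paper. The paper stays on the original applicant set. It writes $v_{AB}^{-h}=(A^a,B^b)$ and $v_{BA}^{-h}=(B^{b'},A^{a'})$ with $a'\ge a$ and $b\ge b'$. For criterion $A$, it applies Theorem~\ref{thm:category_utility_increasing} to compare $(A^a,B^b)$ with $(B^b,A^a)$. It then relabels the $b-b'$ $B$-slots just before the terminal $A$-block as $A$-slots, one at a time from right to left, and each relabeling is an instance of Theorem~\ref{thm:category_slot_increasing}; criterion $B$ is handled symmetrically. The set $P$ is handled by noting that $S\mapsto P\cup S$ induces a responsive preference.

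Your padding identity, $\Ch_w(X\cup D)=D\cup\Ch_{w^{-d}}(X)$ for any $w$ built only from $A$- and $B$-slots, instead turns the truncated comparison into a literal instance of Theorem~\ref{thm:category_utility_increasing}. So you need only that theorem, and you treat all sub-cases uniformly. That includes the case your enumeration skips: when $d$ exceeds both $q_A$ and $q_B$, $v_{BA}^{-h}=(A^{q-h})$ rather than $(B^{q_B-d},A^{q_A})$, but the identity covers it anyway. The paper's route avoids enlarging the primitives and shows that truncation decomposes into a terminal swap plus a terminal expansion. Yours saves a lemma but requires a transfer step back to the original student set.

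That transfer is the one step you should tighten. The ``add $P$'' reasoning does not by itself justify deleting $D$, because a chain of elementary steps from $D\cup S'$ to $D\cup S$ is not a priori guaranteed to leave $D$ untouched. Your parenthetical points to a clean fix. On subsets $Z$ of $(I\setminus P)\cup D$, first compare $Z\cap D$ by any responsive linear order on subsets of $D$. Break ties by comparing $P\cup(Z\setminus D)$ under $\succeq_t$. This relation is complete, transitive, and responsive to the extended ranking with $D$ on top. On sets of the form $D\cup S$ with $S\subset X$, it reproduces exactly the comparison of $P\cup S$ under $\succeq_t$, so it handles $P$ and $D$ at once.

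Finally, neither argument uses the acceptability of $X$. The padding identity holds regardless, so your remark about where that hypothesis enters is not needed.
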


The intuition is that the common set $P$ can be treated as fixed: by
responsiveness, comparing two completions of $P$ is equivalent to comparing
the newly selected students according to the same underlying criterion
ranking.  
If the first $h$ positions lie within the common prefix, the
truncated orders retain a common prefix followed by the same two blocks, so
the original final-block comparison applies directly.  
If truncation reaches
the $A$- and $B$-blocks, deleting the same number of initial slots leaves
weakly more terminal $A$-slots under $v_{BA}^{-h}$ and weakly more terminal
$B$-slots under $v_{AB}^{-h}$.  For criterion $A$, one can transform the
remaining order under $v_{AB}^{-h}$ into that under $v_{BA}^{-h}$ by first
placing the remaining $A$-block after the $B$-block and then relabeling the
$B$-slots immediately preceding the terminal $A$-block as $A$-slots.  The
first operation benefits $A$ by the final-block swap result, and each
relabeling benefits $A$ because all subsequent slots are already assigned to
$A$.  The argument for criterion $B$ is symmetric, and adding the common set
$P$ back preserves both comparisons.

\begin{thm}
\label{thm:bm_later_is_advantaged}
Fix arbitrary submitted rank-order lists of students and arbitrary but fixed precedence orders and
criterion rankings at colleges other than $c_0$.  Under
the criterion-independent acceptability condition,
\begin{equation}
    \mu^{\mathrm{BM}}(v_{BA})
    \succeq_A
    \mu^{\mathrm{BM}}(v_{AB})
    \qquad\text{and}\qquad
    \mu^{\mathrm{BM}}(v_{AB})
    \succeq_B
    \mu^{\mathrm{BM}}(v_{BA}).
\end{equation}
Thus, under the Boston mechanism, the criterion placed later is advantaged in every finite
market.
\end{thm}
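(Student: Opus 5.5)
We couple the two Boston runs, under $v_{AB}$ and $v_{BA}$, round by round, and find the first round in which the two precedence orders can select different students. Let $r^*$ be the first round in which, under $v_{AB}$, the set of round-$r^*$ applicants to $c_0$ that are acceptable to its criteria is large enough that the current round's selection depends on which criterion processes which slot. More concretely, $r^*$ is the first round in which $c_0$'s acceptances under the two orders differ, if such a round exists. If no such round exists, the histories coincide in every round, and the two matchings are identical.

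\textbf{Step 1: common history before $r^*$.} We argue by induction on rounds. Suppose that before round $r$ both runs have the same set of unassigned students and the same assignments at every college. Then both runs have the same round-$r$ applicants everywhere. Colleges other than $c_0$ are unaffected by $c_0$'s precedence order, so they make identical acceptances. At $c_0$, let $h$ be the number of seats already filled (identical across runs) and let $X_r$ be the round-$r$ applicants, all of whom are acceptable to every criterion (by criterion-independent acceptability, unacceptable applicants are simply dropped). If $|X_r|\le q-h$, then every vacant slot processed in order accepts some remaining applicant as long as one exists. Hence all of $X_r$ is accepted under both orders, regardless of the labels, and the histories stay identical. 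So the first order-sensitive round $r^*$ must have $|X_{r^*}|>q-h$. In that round, each of the $q-h$ vacant slots finds a remaining acceptable applicant, so every seat of $c_0$ is filled in round $r^*$ under both orders.

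\textbf{Step 2: finality at $r^*$.} Let $P$ be the set of $h$ students accepted by $c_0$ before $r^*$; this set is common to both runs. The key subtlety here is that the vacant slots of $c_0$ at round $r^*$ are exactly the last $q-h$ slots. This holds because each earlier round fills vacant slots in precedence order starting from the lowest-indexed vacant slot, so the filled slots always form an initial segment. I expect this bookkeeping to be the main point to verify carefully. We should also confirm that the text's convention ``processes all currently vacant slots in their precedence order'' yields the suffix $w^{-h}$. Given this, $c_0$'s final class under $w\in\{v_{AB},v_{BA}\}$ is $P\cup\Ch_{w^{-h}}(X_{r^*})$. Because Boston acceptances are final and every seat is full after $r^*$, no later round, and no rejection elsewhere, can change $c_0$'s class. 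This is exactly where the DA feedback channel of Example~\ref{exa:matching_market_two_categories} is shut off.

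\textbf{Step 3: comparison.} $P$ and $X_{r^*}$ are disjoint, $|P|=h$, and every student in $X_{r^*}$ is acceptable to every criterion after we discard unacceptable applicants. Discarding them is harmless because no slot accepts them under either order. Lemma~\ref{lem:bm_truncated_order} therefore applies directly and gives both inequalities in Theorem~\ref{thm:bm_later_is_advantaged}. In the case where no order-sensitive round occurs, the classes are equal and both weak comparisons hold trivially.
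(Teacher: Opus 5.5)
Your proposal is correct and follows essentially the same route as the paper's proof: couple the two runs, take the first round $r^*$ in which $c_0$'s accepted sets differ, show that $|X^+|>q-h$ so that all remaining seats are filled at once by the suffix orders $v_{AB}^{-h}$ and $v_{BA}^{-h}$, invoke finality, and apply Lemma~\ref{lem:bm_truncated_order}. Two details need tightening, and both follow from what you already wrote. First, the initial-segment property of filled slots rests on criterion-independent acceptability (a slot stays vacant only when no acceptable applicant remains, so every later slot also stays vacant), not merely on processing slots in order. Second, before $r^*$ only the accepted sets and the count $h$ at $c_0$ need to coincide, not the slot-level assignments, which can differ.
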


Theorem~\ref{thm:bm_later_is_advantaged} sharpens the interpretation of the preceding
large-market result.  Competition across multiple colleges is not, by itself, what overturns the
single-college comparison.  Under DA in a finite market, a rejection by $c_0$
can initiate a chain that propagates through the market and returns to $c_0$ before its assignment becomes final.  In the large-market limit, this return event becomes unlikely.  Under the Boston mechanism, by contrast, the
first order-sensitive rejection occurs only when $c_0$ simultaneously fills its remaining seats,
so feedback is shut down by the mechanism even in a finite market.

Without criterion-independent acceptability, Theorem~\ref{thm:bm_later_is_advantaged} can fail.  If different criteria disagree on
which students are acceptable, an order may leave a seat vacant in an early round while the other
order fills it.  Later applicants can then change the comparison even though acceptances are
permanent.

\begin{example}\label{exa:bm_unacceptability}
Let $I=\{i_1,i_2,i_3,i_4\}$, $q=2$, $T=\{A,B\}$, and
$C=\{c_0,c_1\}$, where college $c_1$ has capacity one.  The submitted
rank-order lists of the students and the relevant priority orders are as
follows:
\begin{equation}
    \begin{array}{@{}l @{\qquad\qquad} l@{}}
    \begin{aligned}
        \succ_{i_1}: & \; c_0\\
        \succ_{i_2}: & \; c_0\\
        \succ_{i_3}: & \; c_1,c_0\\
        \succ_{i_4}: & \; c_1
    \end{aligned}
    &
    \begin{aligned}
        \succ_A:     & \; i_1,i_3,i_2,i_4,\emptyset\\
        \succ_B:     & \; i_1,i_3,\emptyset,i_2,i_4\\
        \succ_{c_1}: & \; i_4,i_3,\emptyset.
    \end{aligned}
    \end{array}
\end{equation}
Thus, $i_2$ is acceptable to criterion $A$ but unacceptable to criterion
$B$.  In round 1, students $i_1$ and $i_2$ apply to $c_0$, while students
$i_3$ and $i_4$ apply to $c_1$.  College $c_1$ accepts $i_4$ and rejects
$i_3$.

Under $(A,B)$, criterion $A$ accepts $i_1$, while criterion $B$ rejects
$i_2$.  Student $i_3$ then applies to $c_0$ in round 2 and is accepted by
the vacant slot assigned to criterion $B$.  Hence, $\mu^{\mathrm{BM}}(A,B)=\{i_1,i_3\}$.
Under $(B,A)$, criterion $B$ accepts $i_1$ and criterion $A$ accepts
$i_2$ in round 1.  Thus, $\mu^{\mathrm{BM}}(B,A)=\{i_1,i_2\}$.
Since $i_3\succ_A i_2$, we have $\mu^{\mathrm{BM}}(A,B) \succ_A \mu^{\mathrm{BM}}(B,A)$.
Therefore, criterion $A$ is strictly better off when it is placed earlier.
This failure does not disappear merely by enlarging the market: the example
can be embedded in arbitrarily large markets by adding students and colleges
that do not interact with the displayed agents.
\end{example}

Taken together, the DA and Boston results identify finality as the central
mechanism-design distinction.  Under DA, the later-criterion comparison can
fail in a finite market, but is recovered with high probability when returning rejection chains
vanish in a large market.  Under the Boston mechanism, the same comparison holds exactly in a
finite market under criterion-independent acceptability, because the focal college cannot revise
its class after its order first becomes consequential.

\section{Conclusion}

This paper examines how the sequence in which multiple admission criteria are
applied shapes the composition of the admitted class. For a fixed applicant
set, reversing two terminal criterion blocks weakly benefits each criterion
when its own block is processed later, without requiring any alignment across
criterion rankings. Earlier blocks tend to absorb applicants favored by
several criteria, leaving the later block to choose from the residual pool and
thereby exert greater influence over the class at the margin. This
later-block advantage is specific to terminal reversals. With three or more
criteria, an upstream swap also changes the applicants available to downstream
criteria, so either of the swapped criteria may benefit.

Embedding the same choice rule in a matching market isolates the force that
can overturn this later-block advantage: endogenous feedback into the
college's applicant pool. Under student-proposing DA, an order-induced
rejection can travel through other colleges and return to the focal college
with a new applicant, allowing the criterion moved earlier to benefit even
with only two criteria. If such feedback is absent, the relevant applicant
sets coincide, and the single-college comparison applies. Accordingly, in
regular large markets, the expected share of colleges at which the comparison
fails vanishes, and sufficient thickness makes the result uniform across
colleges. Under the Boston mechanism, holding submitted rank-order lists
fixed and imposing criterion-independent acceptability, the later-block
advantage instead holds exactly in every finite market because final
acceptances prevent subsequent feedback from revising the focal college's
class. The central mechanism-design distinction is therefore whether an
order-induced rejection can return before the college's admissions become
final.

Taken together, these results yield a direct design lesson. In the
environments covered by our positive results, a criterion's ranking is more
strongly reflected in the admitted class when its block is placed later. A
criterion applied early may use its capacity on applicants who are also
favored by other criteria, whereas a criterion applied later can use its
capacity to distinguish among the remaining applicants and thus exert greater
influence over the marginal composition of the class. The familiar practice
of placing the criterion deemed most important first can therefore be
counterproductive. If an admissions authority wants a mission-specific or
otherwise salient criterion to shape who is admitted at the margin, assigning
that criterion the later block may better serve that objective. This lesson
concerns terminal-block ordering and is strongest when consequential feedback
is absent or negligible; it does not extend to arbitrary upstream
rearrangements. The key design question is not which criterion is considered
first, but which criterion is positioned to shape the class at the margin.

\bibliographystyle{ecta}
\bibliography{references}

\appendix

\section{Proofs}\label{sec:proofs}

\subsection{Proof of Theorem~\ref{thm:category_slot_increasing}}

\begin{proof}
Fix an arbitrary applicant set $X\subset I$.
Let $P\subset I$ be the common set of students selected before slot $k^*$ under $v$ and $v'$, let $R=X\setminus P$, and put
$r=q-k^*$. Under $v'$, the suffix beginning $k^*$ selects
\begin{equation}
    U\coloneqq\best_t(R,r+1).
\end{equation}
Let $i^*\coloneqq\topacc_{v(k^*)}(R)$. If $i^*=\emptyset$, under $v$, the suffix beginning at slot $k^*$
selects $W=\best_t(R,r)$. Thus $W\subset U$, and every member of
$U\setminus W$ is acceptable to $t$. Responsiveness implies
$P\cup U\succeq_t P\cup W$.

Suppose next that $i^*\neq\emptyset$. The suffix under $v$ selects
\begin{equation}
    W=\{i^*\}\cup\best_t(R\setminus\{i^*\},r).
\end{equation}
If $i^*\succ_t\emptyset$, the sets $U$ and $W$ have the same cardinality, and
$U$ consists of the highest-ranked $t$-acceptable students of that cardinality. Hence $U$ is
obtained from $W$ through a finite sequence of $t$-improving replacements.

If instead $\emptyset\succ_t i^*$, let $V=\best_t(R,r)$. Then
$W=V\cup\{i^*\}$. Either $U=V$, in which case $U$ is obtained by deleting the
$t$-unacceptable student $i^*$, or $U=V\cup\{j\}$ for a $t$-acceptable student
$j$, in which case $U$ is obtained by replacing $i^*$ with $j$. Responsiveness therefore gives
$P\cup U\succeq_t P\cup W$ in every case. Since
$\Ch_{v'}(X)=P\cup U$ and $\Ch_v(X)=P\cup W$, the result follows.
\end{proof}

\subsection{Proof of Corollary~\ref{cor:repeatedly_apply}}

\begin{proof}
Fix $X\subset I$.  
The case $m=0$ is immediate because $v^0=v$.
For $m\in\{1,\ldots,q\}$, we proceed by induction.
For $m=1$, the precedence order $v^1$ is obtained
from $v$ by assigning criterion $t$ to slot $q$.  Because there is no slot
after $q$, the condition in Theorem~\ref{thm:category_slot_increasing} is
vacuously satisfied.  The theorem therefore gives
\[
    \Ch_{v^1}(X)\succeq_t\Ch_v(X).
\]

Now let $m\geq2$ and suppose that
$\Ch_{v^{m-1}}(X)\succeq_t\Ch_v(X)$.  The orders $v^{m-1}$ and $v^m$
differ only at slot
\[
    k^*=q-m+1,
\]
which is assigned criterion $t$ under $v^m$.  Moreover, by the definition
of $v^{m-1}$, every slot $k>k^*$ is already assigned criterion $t$.
Applying Theorem~\ref{thm:category_slot_increasing} to $v^{m-1}$ thus yields
\[
    \Ch_{v^m}(X)\succeq_t\Ch_{v^{m-1}}(X).
\]
Transitivity of $\succeq_t$, together with the induction hypothesis, implies
\[
    \Ch_{v^m}(X)\succeq_t\Ch_v(X).
\]
This completes the induction.
\end{proof}

\subsection{Proof of Theorem~\ref{thm:category_utility_increasing}}

\begin{proof}
Fix an arbitrary applicant set $X\subset I$.
If $q_A=0$ or $q_B=0$, the two precedence orders coincide and both comparisons are immediate.
Hence suppose $q_A,q_B\geq1$.
We prove $\Ch_{v_{AB}}(X) \succeq_B \Ch_{v_{BA}}(X)$; the other inequality follows by symmetry. For notational simplicity, throughout this proof we renumber the slots $\bar q+1,\bar q+2,\dots$ as the first slot, the second slot, etc. We use the acceptable-student operator $\best_t$ defined in the Model section.

Fix integers $k\ge 1$, $a'\ge 1$, and $b'\ge 1$, and consider a precedence order of the form
\begin{equation}
w = \bigl(v,\underbrace{B,\ldots,B}_{k-1},\underbrace{A,\ldots,A}_{a'},\underbrace{B,\ldots,B}_{b'}\bigr). \tag{\ref{eq:wkm1_form}, revisited}
\end{equation}
Let $R_k(w)$ denote the set of remaining applicants just before the $k$-th slot is processed under $w$, and define
\begin{equation}
\begin{split}
    A_k(w)&\coloneqq \best_A(R_k(w),a'),\\
    B_k(w)&\coloneqq \best_B(R_k(w)\setminus A_k(w),b'),\\
    x_k(w)&\coloneqq \topacc_B(R_k(w)),\\
    y_k(w)&\coloneqq \topacc_A(R_k(w)\setminus A_k(w)).
\end{split}
\end{equation}
The values $x_k(w)$ and $y_k(w)$ may equal $\emptyset$; this records a vacant
$B$-slot or the absence of an additional $A$-acceptable applicant, respectively.

The precedence order $\swap(w,k)$ swaps the boundary labels at the
$k$-th and $(k+a')$-th slots. That is, it changes the label of the
$k$-th post-$v$ slot from $A$ to $B$ and the label of the first slot
in the later $B$-block from $B$ to $A$. The precedence order
$\relab(w,k)$ changes only the label of the $k$-th slot from $A$ to $B$.

\begin{condition}[Boundary Invariance]\label{condition:boundary_invariance}
We say that the \emph{boundary invariance condition} is satisfied for precedence order $w$ and the $k$-th slot if the following holds.
If $x_k(w)\notin A_k(w)$, the condition holds automatically. If
$x_k(w)\in A_k(w)$, it requires
\begin{equation}\label{eq:boundary_invariance}
\left[
\begin{array}{ll}
y_k(w)\neq\emptyset \text{ and } y_k(w)\in B_k(w), &\text{or}\\
y_k(w)=\emptyset \text{ and } |B_k(w)|<b'.
\end{array}
\right.
\end{equation}
\end{condition}

\begin{lem}\label{lem:boundary_invariance}
The following statements hold.
\begin{enumerate}
    \item\label{part:swap_boundary_invariance} The swap is matching-invariant, i.e., $\Ch_{\swap(w, k)}(X)=\Ch_w(X)$ if and only if the boundary invariance condition is satisfied for precedence order $w$ and the $k$-th slot.
    \item\label{part:relabel_boundary_invariance} If the boundary invariance condition is not satisfied for precedence order $w$ and the $k$-th slot, the relabeling is matching-invariant, i.e., $\Ch_{\relab(w, k)}(X)=\Ch_w(X)$.
\end{enumerate}
\end{lem}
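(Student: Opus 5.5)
Both parts of the statement only concern the suffix processed from slot $k$ onward, starting from the common remaining set $R\coloneqq R_k(w)$. The orders $w$, $\swap(w,k)$ and $\relab(w,k)$ agree before slot $k$. So it suffices to compare the sets that their suffixes select from $R$. I will write these suffix outcomes explicitly in terms of $\best$ and $\topacc$.

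Under $w$, the suffix is $a'$ slots of $A$ followed by $b'$ slots of $B$, and it selects $A_k(w)\cup B_k(w)$.

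Under $\swap(w,k)$, the suffix is $(B,A^{a'},B^{b'-1})$. It first selects $x\coloneqq x_k(w)$. The $A$-block then selects $\best_A(R\setminus\{x\},a')$, and the final block takes the $b'-1$ best $B$-acceptable students among the rest.

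Under $\relab(w,k)$, the suffix is $(B,A^{a'-1},B^{b'})$. It selects $x$, then $\best_A(R\setminus\{x\},a'-1)$, then $b'$ further $B$-choices.

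\textbf{Case $x\notin A_k(w)$.} This includes $x=\emptyset$, where the slot is vacant. Removing $x$ does not affect the $A$-block, so $\best_A(R\setminus\{x\},a')=A_k(w)$. Since $x$ is $B$'s top choice in $R$ and $x\notin A_k(w)$, we have $x\in B_k(w)$ whenever $x\neq\emptyset$ (as $b'\ge1$). The final $B$-block under the swap then picks the remaining $b'-1$ members of $B_k(w)$. Hence the two outcomes are equal, and the condition holds automatically, which is consistent with part~\ref{part:swap_boundary_invariance}.

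\textbf{Case $x\in A_k(w)$.} Here $x$ is $A$-acceptable and lies in $A$'s top $a'$. Removing $x$ shifts the $A$-block under the swap by one position. It selects $(A_k(w)\setminus\{x\})\cup\{y\}$ with $y=y_k(w)$, or just $A_k(w)\setminus\{x\}$ when $y=\emptyset$. Taking $x$ first, the swap's first two blocks therefore select $A_k(w)\cup\{y\}$. The final $B$-block then picks $\best_B(R\setminus(A_k(w)\cup\{y\}),b'-1)$.

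I will compare this set with $B_k(w)=\best_B(R\setminus A_k(w),b')$, working through three subcases.
\begin{itemize}
\item If $y\in B_k(w)$, removing $y$ from the pool leaves the $B$-ranking of the others unchanged, so the final block picks $B_k(w)\setminus\{y\}$ and the outcomes coincide.
\item If $y=\emptyset$ and $|B_k(w)|<b'$, the final block picks all remaining $B$-acceptable students, giving $B_k(w)$ again. Equality follows.
\item Otherwise, either $y\neq\emptyset$ and $y\notin B_k(w)$, or $y=\emptyset$ and $|B_k(w)|=b'$. In the first subcase $y$ is newly admitted although it was not admitted under $w$, so the outcomes differ. In the second subcase the swap admits only $b'-1$ from a pool that under $w$ yielded $b'$ students, so it drops a member of $B_k(w)$ and again the outcomes differ.
\end{itemize}
This proves the ``only if'' direction. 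I expect the main care to lie in this $\emptyset$ bookkeeping: vacancies, and the possibility that $y$ is $B$-unacceptable, must be checked so that the set inequality is genuinely strict.

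For part~\ref{part:relabel_boundary_invariance}, failure of the condition forces $x\in A_k(w)$. The relabelled suffix first takes $x$. Its $A$-block of length $a'-1$ on $R\setminus\{x\}$ then takes exactly $A_k(w)\setminus\{x\}$, because $x$ was among $A$'s top $a'$. The final $B$-block of length $b'$ then faces $R\setminus A_k(w)$ and picks $B_k(w)$. So the outcome equals $\Ch_w(X)$.

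This argument in fact never uses the failure of the condition beyond $x\in A_k(w)$. That is harmless, because the statement only requires the implication in one direction.
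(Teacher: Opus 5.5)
Your proof is correct and follows essentially the same route as the paper: compare the suffix selections from $R_k(w)$, split on whether $x_k(w)\in A_k(w)$ (with $x_k(w)=\emptyset$ absorbed into the first case) and then on whether $y_k(w)$ is empty, and for the relabeling use only the fact that $x_k(w)\in A_k(w)$. You also spell out why the outcomes genuinely differ in the failing subcases, either because $y_k(w)$ is newly admitted or because the final $B$-block loses a seat; the paper's proof leaves this step implicit.
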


\begin{proof}
Write $R=R_k(w)$, $A=A_k(w)$, $B=B_k(w)$, $x=x_k(w)$, and $y=y_k(w)$.
Under $w$, the suffix beginning at the $k$-th post-$v$ slot selects $A\cup B$. We verify
part~\ref{part:swap_boundary_invariance} case by case.

If $x=\emptyset$, no applicant in $R$ is acceptable to $B$. The first slot under the swap and
all later $B$-slots are vacant, while the $A$-block still selects $A$. Thus the swap is
matching-invariant, and the boundary condition holds automatically because
$x\notin A$.

Suppose $x\neq\emptyset$ and $x\notin A$. Removing $x$ does not change the applicants selected
by the $a'$ $A$-slots. Moreover, $x$ is the $B$-best applicant in $R\setminus A$, so moving its
selection to the first slot and reducing the final $B$-block by one slot leaves its total selection
unchanged. Hence the swap selects $A\cup B$, exactly as $w$ does. Again, the boundary condition
holds automatically.

It remains to consider $x\in A$. If $y\neq\emptyset$, the $A$-block after the swapped first slot
selects
\begin{equation}
    (A\setminus\{x\})\cup\{y\}.
\end{equation}
The final $B$-block has one fewer slot and faces $(R\setminus A)\setminus\{y\}$.
Consequently, the total selected set remains $A\cup B$ if and only if $y\in B$.

If instead $y=\emptyset$, there is no additional $A$-acceptable applicant, so the $A$-block
after the swapped first slot selects $A\setminus\{x\}$. The final $B$-block selects
$\best_B(R\setminus A,b'-1)$. This is equal to $B$ if and only if the original $B$-block was
not full, that is, if and only if $|B|<b'$. These cases establish the equivalence in
part~\ref{part:swap_boundary_invariance}, including all possible vacancies.

For part~\ref{part:relabel_boundary_invariance}, failure of the boundary condition can occur only
when $x\in A$. Under $\relab(w,k)$, the first slot selects $x$, the subsequent $a'-1$
$A$-slots select $A\setminus\{x\}$, and the final $b'$ $B$-slots face $R\setminus A$ and
select $B$. The suffix therefore selects $A\cup B$, proving
$\Ch_{\relab(w,k)}(X)=\Ch_w(X)$.
\end{proof} 

We return to the proof of Theorem~\ref{thm:category_utility_increasing}. 
Define a sequence of precedence orders $\{w_k\}_{k=0}^{q_B}$ inductively as follows.
Let $w_0\coloneqq v_{AB}$.
For each $k\in\{1,\ldots,q_B\}$, if $w_{k-1}(k)=B$, let
$w_j=w_{j-1}$ for all $j\in\{k,\dots,q_B\}$ and stop the procedure. Otherwise,
$w_{k-1}(k)=A$, and $w_{k-1}$ can be written in the form
\eqref{eq:wkm1_form} for some $a'\ge 1$ and $b'\ge 1$. We define
$w_k \coloneqq \swap(w_{k-1}, k)$ if the boundary invariance condition holds for
$w_{k-1}$ at slot $k$, and $w_k \coloneqq \relab(w_{k-1}, k)$ otherwise.
By Lemma~\ref{lem:boundary_invariance}, the admitted set is invariant at each step:
\begin{equation}\label{eq:mu_invariant_each_step}
\Ch_{w_k}(X)=\Ch_{w_{k-1}}(X)\quad \text{for all }k\in\{1,\dots, q_B\}.
\end{equation}
Let $v^*\coloneqq w_{q_B}$. By construction, the first $q_B$ entries after the prefix $v$ in $v^*$ are all $B$.
Hence there exists an integer $q_A^*\in \{0, \dots, q_A\}$ such that
\begin{equation}
v^* =\bigl(v,\underbrace{B,\ldots,B}_{q_B},\underbrace{A,\ldots,A}_{q_A^*},\underbrace{B,\ldots,B}_{q_A-q_A^*}\bigr).\tag{\ref{eq:vstar_form}, revisited}
\end{equation}
From \eqref{eq:mu_invariant_each_step}, we have $\Ch_{v^*}(X)=\Ch_{v_{AB}}(X)$.

The precedence order $v^*$ is obtained from $v_{BA}$ by replacing the last $(q_A-q_A^*)$ slots (which are $A$ under $v_{BA}$) with $B$.
Therefore, by Corollary~\ref{cor:repeatedly_apply}, we have
\begin{equation}\label{eq:vstar_b_better_than_vBA}
\Ch_{v_{AB}}(X) = \Ch_{v^*}(X)\succeq_B \Ch_{v_{BA}}(X)
\end{equation}
as desired.
\end{proof}

\subsection{Proof of Lemma~\ref{lem:proposal_set_representation}}

\begin{proof}
Every student in $\mathcal A(w)$ proposes first to the first slot of $c_0$.  That slot therefore ends
DA holding its most-preferred acceptable student in $\mathcal A(w)$, or vacant if none of the students in $\mathcal{A}(w)$ is acceptable to that slot. Every student in $\mathcal A(w)$ who is not held by the first slot is eventually rejected there and, under the fixed
increasing-index refined preference, then proposes to the second slot. Consequently, the second slot ends
DA holding its most-preferred acceptable student among the applicants not held by the first slot,
or vacant if none is acceptable. Repeating this argument slot by slot is exactly the sequential construction of $\Ch_w(\mathcal A(w))$.
\end{proof}

\subsection{Proof of Proposition~\ref{prop:common_applicant_set}}

\begin{proof}
Let $X=\mathcal A(v_{AB})=\mathcal A(v_{BA})$.  By
Lemma~\ref{lem:proposal_set_representation}, the two sets of admitted students are
$\Ch_{v_{AB}}(X)$ and $\Ch_{v_{BA}}(X)$.  The claim therefore follows
from Theorem~\ref{thm:category_utility_increasing}, applied to the fixed applicant set $X$.
\end{proof}

\subsection{Proof of Lemma~\ref{lem:no_return_inclusion}}\label{subsec:proof_lem_no_return_inclusion}

\begin{proof}
Let the baseline matching be $\bmu=\bmu(w)$, and let
$X=\mathcal A(w)$, $D=\Ch_{w'}(X)$,
$R=\mu(w)\setminus D$, and $N=D\setminus\mu(w)$.  We construct a stable matching under $w'$ in
which every student outside $X$ is assigned to a college that she strictly prefers to $c_0$ (or
finds $c_0$ unacceptable).

Run the auxiliary $(w,w')$-transition process. For every $i\in N$, we have $i\in D\subset X=\mathcal A(w)$ but
$i\notin\mu(w)$. Thus, under $w$, student $i$ proposed to the slots of
$c_0$ but was ultimately rejected by every slot of $c_0$. Therefore, $i$
strictly prefers $c_0$ to her assignment under $\bmu$. Note that only a student in $N$ who was
matched outside $c_0$ has a placeholder at an old slot; an unmatched student has no old slot
and no placeholder. Suppose that the process does not return, and let $\eta$ be the real-student
assignment obtained after the displacement chain initiated by each student in
$R$ has ended and all remaining placeholders have been deleted. We next
construct from $\eta$ a matching that is stable under $w'$. Step~1 shows that
no pair $(i,s)$ with $s$ occupied under $\eta$ blocks $\eta$ under the
preferences and slot priorities associated with $w'$. Step~2 starts from
$\eta$ and successively fills vacant slots that belong to blocking pairs; the
resulting matching $\nu$ has no blocking pair under $w'$.

\noindent\emph{Step 1: Ruling out blocking pairs involving occupied slots.}

\noindent\emph{Case 1: $s$ is a slot outside $c_0$.}
Throughout
the transition process, a real holder or placeholder is replaced only by a student whom $s$ ranks above it. Thus the final real holder of $s$ is weakly
preferred by $s$ to its holder under $\bmu$ (where a vacancy under $\bmu$ is
represented by $\emptyset$).

\noindent\emph{Case 1(a): $i$ never becomes active.}
Suppose that $s\succ_i'\eta(i)$. If $i\in N$, then $i\in D$, so $i$ is
assigned to a slot of $c_0$ in the initial state of the transition process.
Because the process does not return to $c_0$, this assignment is unchanged,
and hence $\eta(i)$ is a slot of $c_0$. On the other hand, since
$i\in D\subset X=\mathcal A(w)$ and $i\notin\mu(w)$, under $w$ she
proposed to every slot of $c_0$ and was ultimately rejected by all of them.
Therefore, $\eta(i)\succ_i'\bmu(i)$. If $i\in\mu(w)\cap D$, both
$\eta(i)$ and $\bmu(i)$ are slots of $c_0$; because all slots of a college
are consecutive in $\succ_i'$, the relation $s\succ_i'\eta(i)$ implies
$s\succ_i'\bmu(i)$. Finally, suppose that $i\notin D$. Every student in
$R=\mu(w)\setminus D$ becomes active when her displacement chain is
initiated, so the assumption that $i$ never becomes active implies
$i\notin R$. The initialization changes the assignment of a student only if
she belongs to $D$ or $R$, and hence $\eta(i)=\bmu(i)$. Thus, in every case,
$s\succ_i'\bmu(i)$. The
stability of $\bmu$ then implies $\bmu(s)\succ_s i$. Since the holder of $s$
can only improve in $s$'s preference during the transition process,
$\eta(s)\succeq_s\bmu(s)\succ_s i$. Hence $(i,s)$ is not a blocking pair
for $\eta$.

\noindent\emph{Case 1(b): $i$ becomes active.}
There are two ways in which a student can become active for the first time.
If $i\in R$, she is released from $c_0$ and first proposes to the
highest-ranked acceptable slot below the entire block of $c_0$. Otherwise,
$i$ becomes active when she is displaced from a slot $s_0$ outside $c_0$ and
first proposes to the highest-ranked acceptable slot below $s_0$. Because
this is the first time that $i$ becomes active, $s_0$ is her assignment under
$\bmu$. Each time $i$ subsequently becomes active, she resumes immediately
below the slot from which she has just been displaced. Hence her proposal
position moves only downward in her refined preference order.

Consider first a slot $s$ outside $c_0$ that $i$ ranks above the
slot to which she first proposes upon becoming active. If $i\in R$, then her
baseline assignment is a slot of $c_0$. If $i$ ranks $s$ above that assignment,
baseline stability gives $\bmu(s)\succ_s i$; if she ranks $s$ below the block
of $c_0$, then the definition of her first proposal implies that she finds
$s$ unacceptable. In either case, $(i,s)$ cannot block. Now suppose instead
that $i$ first becomes active when she is displaced from her baseline
assignment $s_0$. If $s\succ_i's_0$, baseline stability gives
$\bmu(s)\succ_s i$. If $s=s_0$, the student who displaced $i$ has higher
priority than $i$ at $s$, and the final occupant of $s$ has still higher
priority. Finally, if $s_0\succ_i's$ and $s$ is ranked above the first slot to
which $i$ proposes, then $i$ finds $s$ unacceptable. Thus, in all cases, such
a slot $s$ cannot form a blocking pair with $i$. Here we also use the fact
that the occupant of each slot outside $c_0$ can only be replaced by a
higher-priority student, so its final occupant is weakly preferred by the slot
to its occupant under $\bmu$. Next,
every slot outside $c_0$ at or below the first slot to which $i$ proposes
that she prefers to $\eta(i)$ is reached while $i$ is active in a
displacement chain. At such a slot, $i$ is
either unacceptable, rejected in favor of a higher-priority occupant, or
initially held and later displaced by a higher-priority student. Thereafter,
the occupant could only be replaced by a still higher-priority student. Hence
the final occupant of the slot is ranked above $i$. Thus, no slot outside
$c_0$ that is occupied under $\eta$ is part of a blocking pair.

\noindent\emph{Case 2: $s$ is a slot of $c_0$.}
Write $s=(c_0,k)$ and
$d=\eta(s)$. Because the transition process does not return to $c_0$, the
assignment of students in $D=\Ch_{w'}(X)$ to the slots of $c_0$ remains
unchanged throughout the process. Thus, at step $k$ in the construction of
$D$, slot $s$ selected student $d$.

\noindent\emph{Case 2(a): $i\in X$.}
Consider any student $i\in X$ such that $s\succ_i'\eta(i)$. Student $i$
cannot have been selected by a slot of $c_0$ preceding $s$, because every
student ranks the slots of $c_0$ in increasing order of their indices.
Therefore, $i$ was still available when $s$ selected a student at step $k$.
Since, at step $k$ in the construction of $D=\Ch_{w'}(X)$, slot $s$
selected $d$ rather than $i$, we have
$d\succ_s i$. Hence $(i,s)$ is not a blocking pair for $\eta$.

\noindent\emph{Case 2(b): $i\notin X$.}
Student $i$ never proposed to
the first slot of $c_0$ under the baseline DA. Hence either $i$ finds $c_0$
unacceptable or her assignment $\bmu(i)$ is ranked above every slot of
$c_0$ in her refined preference order. If $i$ never becomes active in the
transition process, her assignment is unchanged, so she does not prefer $s$
to $\eta(i)$. Suppose instead that $i$ becomes active. Her proposal position
then moves downward in her refined preference order from the slot that
released her. 
If she finds $c_0$ acceptable, then, were she to pass below the block
of $c_0$ or become unmatched, she would first have to reach the first
slot of $c_0$.
By definition, this would make the transition process return to
$c_0$. Under the no-return hypothesis, she must therefore be held at a slot
that she ranks above every slot of $c_0$.
If instead $i$ finds $c_0$ unacceptable, individual
rationality of $\eta$ gives
\(\eta(i)\succeq_i'\emptyset\succ_i'(c_0,k)\) for every
\(k\in\{1,\ldots,q\}\). Here $\eta$ is individually rational because every
student assigned to $c_0$ under $\eta$ belongs to $D\subset X$ and is
acceptable to the selecting slot, while every assignment outside $c_0$ is
either inherited from the individually rational matching $\bmu$ or created
through a mutually acceptable proposal. Thus, regardless
of whether $i$ becomes active,
\begin{equation}\label{eq:eta_outside_X_above_c0}
    \eta(i)\succ_i'(c_0,k)
    \qquad
    \text{for every }i\in I\setminus X
    \text{ and every }k\in\{1,\ldots,q\}.
\end{equation}
This conclusion does not depend on whether a particular slot of $c_0$ is
occupied. In particular, $i$ does not prefer $s$ to $\eta(i)$, and $(i,s)$
is not a blocking pair for $\eta$.

\noindent\emph{Step 2: Eliminating blocking pairs involving vacant slots.}
Starting from $\eta$, we fill vacant slots while ensuring that,
after every move, no slot occupied in the current matching belongs to a
blocking pair. Step~1 shows that this property holds before the first move.
If a slot $s$ that is vacant in the current matching belongs to a
blocking pair, assign to $s$ the highest-priority acceptable student among all
students who strictly prefer $s$ to their current assignments, and vacate that
student's former slot, if any. This move preserves feasibility, strictly
improves the moving student, and leaves every other student's assignment
unchanged.

We now show that this property continues to hold after each move.
The newly occupied slot $s$ does not
belong to a blocking pair because its new holder has the highest priority
among all acceptable students who prefer $s$ to their current assignments.
For every other slot that remains occupied, its holder is unchanged, while no
student's assignment becomes worse; hence no new student can form a blocking
pair with that slot. If the moving student's former slot becomes vacant, the
property just established concerns only occupied slots, so it
does not apply to the newly vacant slot. That slot is treated by the same
procedure if it belongs to a blocking pair. Because each move strictly
improves one student's assignment and the sets of students and slots are
finite, the procedure terminates. At termination,
no occupied slot belongs to a blocking pair, and the stopping condition rules out
blocking pairs involving vacant slots.

Because $\eta$ is individually rational, as established in
Case~2(b), and each vacancy-filling move assigns a student to a mutually
acceptable slot that she strictly prefers to her current assignment, every
such move preserves individual rationality. Consequently, the resulting matching $\nu$ is
feasible, individually rational, and has no blocking pair, so it is stable
under $w'$.

Because vacancy filling only improves students,
\begin{equation}
    \nu(i)\succeq_i'\eta(i)
    \qquad\text{for every }i\in I.
\end{equation}
Combining this inequality with \eqref{eq:eta_outside_X_above_c0} gives
\begin{equation}\label{eq:nu_outside_X_above_c0}
    \nu(i)\succeq_i'\eta(i)\succ_i'(c_0,k)
    \qquad
    \text{for every }i\in I\setminus X
    \text{ and every }k\in\{1,\ldots,q\}.
\end{equation}

The matching $\bmu(w')$ is student-optimal among all stable matchings under $w'$.  Hence every
student weakly prefers $\bmu(w')$ to $\nu$. In particular, for every
$i\in I\setminus X$ and every $k\in\{1,\ldots,q\}$,
\begin{equation}
    \bmu(w')(i)\succeq_i'\nu(i)\succ_i'(c_0,k).
\end{equation}
Therefore, no student in $I\setminus X$ proposes to $c_0$ in DA under $w'$. Thus
$\mathcal A(w')\subset X=\mathcal A(w)$.
\end{proof}

\subsection{Proof of Theorem~\ref{thm:no_feedback_order_comparison}}

\begin{proof}
By Lemma~\ref{lem:no_return_inclusion}, no return from
$v_{AB}$ to $v_{BA}$ implies
$\mathcal A(v_{BA})\subset\mathcal A(v_{AB})$.
Applying the lemma in the reverse direction gives the reverse
inclusion. Hence the two applicant sets coincide, and the result
follows from Proposition~\ref{prop:common_applicant_set}.
\end{proof}

\subsection{Proof of Theorem~\ref{thm:large_market_order_comparison}}
\label{subsec:proof_large_market_order_comparison}

Throughout this proof, probabilities and expectations refer to the random
college lists in market $n$. Because the sets of students and colleges and the
list length are finite, the underlying probability space is finite; we
therefore suppress its explicit specification.

We generate students' lists independently, each by sequential draws without
replacement, and expose a new original college only when the algorithm requires
it. If student $i$ has already drawn the distinct set $S_i$, then her next
original-college draw is $c\notin S_i$ with conditional probability
\begin{equation}\label{eq:deferred_draw_probability}
    \frac{p_c^n}{1-\sum_{d\in S_i}p_d^n}.
\end{equation}
Both the baseline DA exploration and the transition explorations below are
non-anticipating: the decision whether to expose another draw, and whose draw
to expose, depends only on information already revealed. Hence
\eqref{eq:deferred_draw_probability} remains valid conditional on every partial
history generated by these explorations.

For a baseline precedence order $w$, let $\mathcal H^n(w)$ denote the
information contained in the complete baseline record. Formally, it is the
sigma-field generated by all original-college draws exposed by the end of DA,
the proposal and tentative-holding history, and the resulting matching. Thus,
for every fixed alternative order $w'$, the baseline applicant set, matching,
and the initial holder and placeholder configuration for the transition
process are determined by $\mathcal H^n(w)$. Conditioning on this record does
not reveal any value of an unexposed list entry, so the deferred-decision rule
continues to apply during the subsequent transition exploration.

Index colleges so that
$p_1^n\geq\cdots\geq p_n^n$. For a baseline run under $w$, let
$Y_c^n(w)$ be the number of colleges among $\{1,\ldots,c\}$ that have not been
revealed in any student's list by the end of baseline DA. This count is
determined by the baseline record, and every college it counts has all of its
slots vacant in the baseline matching.

\begin{lem}[Unused-college estimates]\label{lem:unused_college_estimates}
For every regular sequence and every $c>4k$,
\begin{equation}\label{eq:Yc_elementary_lower_bound}
    \mathbb E[Y_c^n(w)]
    \geq
    \frac{c}{2}\exp\left(-\frac{8\bar q n k}{c}\right).
\end{equation}
Moreover, the occupancy estimates of \citet{kojima2009incentives} give
\begin{equation}\label{eq:Y_variance_bounds}
    \operatorname{Var}[Y_c^n(w)]\leq\mathbb E[Y_c^n(w)],
    \qquad
    \operatorname{Var}[Y_T^n]\leq\mathbb E[Y_T^n].
\end{equation}
These bounds are uniform over baseline precedence orders and slot priorities.
Consequently, for $Y\in\{Y_c^n(w),Y_T^n\}$,
\begin{equation}\label{eq:Y_half_mean_bound}
    \Pr\left[Y\leq\frac{\mathbb E[Y]}{2}\right]
    \leq\frac{4}{\mathbb E[Y]}.
\end{equation}
\end{lem}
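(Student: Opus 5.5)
The plan is to prove the three claims in order: the elementary lower bound \eqref{eq:Yc_elementary_lower_bound}, then the variance bounds \eqref{eq:Y_variance_bounds}, and finally the tail bound \eqref{eq:Y_half_mean_bound}, which follows from the first two by Chebyshev.

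\emph{Lower bound.} I would first replace ``not revealed by the end of baseline DA'' by the stronger event ``$j$ does not appear in any student's complete $k$-college list.'' Under the deferred-decision exposure scheme, a college absent from every complete list is never exposed. So if $\widehat Y_c^n$ counts colleges in $\{1,\ldots,c\}$ absent from all complete lists, then $Y_c^n(w)\ge \widehat Y_c^n$ pointwise. Moreover, $\widehat Y_c^n$ depends only on the list distribution, not on $w$ or on slot priorities, which gives the claimed uniformity.

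Next I restrict to indices $j\in\{\lceil c/2\rceil,\ldots,c\}$; there are at least $c/2$ of them. Fix such a $j$ and a student $i$ who has already drawn $S_i$ with $|S_i|<k$ and $j\notin S_i$. The remaining mass is at least $\sum_{l\le j,\,l\notin S_i}p_l^n\ge (j-k)p_j^n$, because $p_l^n\ge p_j^n$ for $l\le j$. By \eqref{eq:deferred_draw_probability}, the conditional probability that the next draw is $j$ is therefore at most $1/(j-k)$. Since $j\ge c/2$ and $c>4k$, we have $j-k>c/4$, so this probability is at most $4/c$. Hence student $i$ misses $j$ with probability at least $(1-4/c)^k$.

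I would then convert this to $\exp(-8k/c)$ using $1-x\ge e^{-2x}$. This is the step needing care, because that inequality requires $x\le 1/2$, i.e.\ $c\ge 8$. If $c>4k$ only guarantees $c\ge 5$, I would either handle the small cases separately or sharpen the per-draw bound.

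Students draw independently and $|I^n|\le \bar q n$, so
\begin{equation}
\Pr[j \text{ absent from all lists}]\ge \exp\!\left(-\frac{8\bar q n k}{c}\right).
\end{equation}
Summing over the at least $c/2$ admissible $j$ and using linearity of expectation gives \eqref{eq:Yc_elementary_lower_bound}.

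\emph{Variance.} For both $Y_c^n(w)$ and $Y_T^n$, I write the count as a sum of indicators $\mathbf 1\{\text{college } j \text{ is unused}\}$. I then invoke the occupancy argument of \citet{kojima2009incentives}: in a balls-into-bins experiment with independent students, the pairwise events ``$j$ unused'' and ``$j'$ unused'' are nonpositively correlated. This gives $\operatorname{Var}[Y]\le\sum_j\operatorname{Var}[\mathbf 1_j]\le \sum_j \mathbb E[\mathbf 1_j]=\mathbb E[Y]$.

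\emph{Main obstacle.} This is the hardest step. For $Y_c^n(w)$ the event ``not revealed by baseline DA'' depends on the DA run, so negative correlation is not immediate. My first attempt would be to apply the variance bound to the mechanism-free count $\widehat Y_c^n$ defined above, and then check that the subsequent argument only needs a lower-tail bound for a quantity dominated by $Y_c^n(w)$. For $Y_T^n$, the definition via $N_c^n<q_c$ is exactly Kojima--Pathak's setting, so I would cite their lemma directly.

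\emph{Tail bound.} Chebyshev gives
\begin{equation}
\Pr\!\left[Y\le \tfrac{\mathbb E[Y]}{2}\right]\le \Pr\!\left[|Y-\mathbb E[Y]|\ge \tfrac{\mathbb E[Y]}{2}\right]\le \frac{4\operatorname{Var}[Y]}{\mathbb E[Y]^2}\le \frac{4}{\mathbb E[Y]},
\end{equation}
which is \eqref{eq:Y_half_mean_bound}.
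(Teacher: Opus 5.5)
Your lower bound and the Chebyshev step follow the paper's proof. The paper also bounds the conditional chance that a draw hits a college by $1/(j-k)$, using that colleges $k+1,\dots,j$ are at least as popular. It counts at most $\bar q nk$ draws, so it too is really bounding your list-only event ``$j$ absent from every complete list.'' It then sums over $j\in\{\lceil c/2\rceil,\dots,c\}$.

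Your $c=5$ worry is genuine for your chain: with $k=1$, $c=5$ and $x=4/5$, one has $1-x<e^{-2x}$. The fix is to exponentiate before weakening. Since $j\ge\lceil c/2\rceil$ and $c>4k$, we have $j>2k$, so $x=1/(j-k)\le 1/(k+1)\le 1/2$ and $1-x\ge e^{-2x}$ applies. Only afterwards use $j-k\ge j/2\ge c/4$.

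The gap is in the variance claim for $Y_c^n(w)$. The lemma asserts $\operatorname{Var}[Y_c^n(w)]\le\mathbb E[Y_c^n(w)]$ and $\Pr[Y_c^n(w)\le\mathbb E[Y_c^n(w)]/2]\le 4/\mathbb E[Y_c^n(w)]$. Your plan proves the variance bound only for $\widehat Y_c^n$. The domination $\widehat Y_c^n\le Y_c^n(w)$ then yields only $\Pr[Y_c^n(w)\le\mathbb E[\widehat Y_c^n]/2]\le 4/\mathbb E[\widehat Y_c^n]$. Because $\mathbb E[\widehat Y_c^n]\le\mathbb E[Y_c^n(w)]$, this cannot be upgraded to the stated inequality, and it says nothing about $\operatorname{Var}[Y_c^n(w)]$. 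As written, you prove a different lemma.

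Your diagnosis is nonetheless sound, and the paper does not close this step either. It simply cites the Kojima--Pathak negative-occupancy bounds for both counts, on the ground that they ``depend only on the original-college draws.'' That is true of $Y_T^n$ and of your $\widehat Y_c^n$, which are functions of the complete lists. But $Y_c^n(w)$ is defined by which entries baseline DA adaptively exposes. The standard negative-correlation argument concerns a fixed family of independent draws, not a run-dependent subset of them.

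To finish, either prove the bound for the exposure count directly, or restate the lemma in your weaker form and check the downstream use. The downstream use does go through, handled as follows:
\begin{itemize}
\item Work on the event $\{Y_c^n(w)>\mathbb E[\widehat Y_c^n]/2\}$, which is determined by the baseline record.
\item Keep the baseline-unrevealed colleges as the absorbers in Lemma~\ref{lem:bounded_release_return}.
\item Note that \eqref{eq:Yc_elementary_lower_bound} is really a lower bound on $\mathbb E[\widehat Y_c^n]$. The proof of Theorem~\ref{thm:large_market_order_comparison} therefore holds with $M_c^n(w)$ replaced by $\mathbb E[\widehat Y_c^n]$.
\end{itemize}
Do not condition on $\widehat Y_c^n$ itself, and do not use list-absent colleges as absorbers. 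That event fixes unexposed list entries, which breaks the deferred-decision rule \eqref{eq:deferred_draw_probability}, and such colleges can never be drawn by a carrier.
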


\begin{proof}
Let
\begin{equation*}
    Q^n\coloneqq\sum_{d=1}^k p_d^n.
\end{equation*}
Fix a college $d> 2k$. Before $d$ has appeared in a student's list, the total
probability mass of that student's previously drawn colleges is at most $Q^n$.
Therefore, whenever another draw is requested, the conditional probability of
drawing $d$ is at most $p_d^n/(1-Q^n)$. Because the $d-k$ colleges
$k+1,\ldots,d$ are all at least as popular as $d$,
\begin{equation}
    p_d^n\leq\frac{1-Q^n}{d-k}.
\end{equation}
There are at most $|I^n|k\leq\bar q n k$ draws even if every entry of every
student's list is exposed, and absence from all of these draws implies that
$d$ is not exposed by baseline DA. Hence, if $E_d$ denotes the latter event,
\begin{equation}
\begin{split}
    \Pr[E_d]
    &\geq
    \left(1-\frac{1}{d-k}\right)^{\bar q n k}\\
    &\geq
    \exp\left(-\frac{2\bar q n k}{d-k}\right)
    \geq
    \exp\left(-\frac{4\bar q n k}{d}\right).
\end{split}
\end{equation}
If $c>4k$, then every
$d\in\{\lceil c/2\rceil,\ldots,c\}$ satisfies $d>2k$, and therefore
\begin{equation*}
\begin{split}
    \mathbb E[Y_c^n(w)]
    &\geq
    \sum_{d=\lceil c/2\rceil}^c \Pr[E_d]\\
    &\geq
    \frac{c}{2}
    \exp\left(-\frac{8\bar q n k}{c}\right).
\end{split}
\end{equation*}
This proves \eqref{eq:Yc_elementary_lower_bound}.

The variance inequalities in \eqref{eq:Y_variance_bounds} are the
negative-occupancy bounds used in the online appendix of
\citet{kojima2009incentives}. They depend only on the original-college draws,
so introducing internal slots or changing their priorities does not affect
them. Finally, Chebyshev's inequality gives
\begin{equation*}
    \Pr\left[Y\leq\frac{\mathbb E[Y]}{2}\right]
    \leq
    \Pr\left[|Y-\mathbb E[Y]|\geq\frac{\mathbb E[Y]}{2}\right]
    \leq
    \frac{4\operatorname{Var}[Y]}{\mathbb E[Y]^2}
    \leq
    \frac{4}{\mathbb E[Y]},
\end{equation*}
which proves \eqref{eq:Y_half_mean_bound}.
\end{proof}

Define
\begin{equation}
    c^*(n)
    \coloneqq
    \frac{16\bar q n k}{\log(\bar q n)}.
\end{equation}
For all sufficiently large $n$, $c^*(n)>4k$. By
\eqref{eq:Yc_elementary_lower_bound}, for every $c>c^*(n)$ and every baseline
order $w$,
\begin{equation}\label{eq:Yc_lower_bound}
    \mathbb E[Y_c^n(w)]
    \geq
    \frac{8k\sqrt{\bar q n}}{\log(\bar q n)}.
\end{equation}
In particular, the right-hand side diverges uniformly over
$c>c^*(n)$ and over baseline orders.

For the sufficiently thick case, recall that $N_c^n$ is the number of
students whose complete list contains $c$. For a baseline run under $w$, let
$\hat N_c^n(w)$ be the number of students for whom $c$ has been exposed by the
end of baseline DA, and define
\begin{equation}
    \hat V_T^n(w)
    \coloneqq
    \left\{c\in C^n:
    \frac{p_{\max}^n}{p_c^n}\leq T
    \text{ and }
    \hat N_c^n(w)<q_c
    \right\},
    \qquad
    \hat Y_T^n(w)\coloneqq|\hat V_T^n(w)|.
\end{equation}
Both objects are determined by the baseline record. Since
$\hat N_c^n(w)\leq N_c^n$ pathwise,
\begin{equation}
    V_T^n\subset\hat V_T^n(w)
    \qquad\text{and}\qquad
    Y_T^n\leq\hat Y_T^n(w).
\end{equation}
Moreover, every $d\in\hat V_T^n(w)$ has a vacant slot in the baseline
matching, because fewer than $q_d$ students have exposed $d$.

We next define fixed-release return events. Conditional on a realized baseline
record, fix an alternative order $w'$ and the holder and placeholder
configuration induced by the transition construction. For every nonempty
$B\subset\mu_c^n(w)$, let $\hat F_{c,B}^n(w,w')$ be the event that the
corresponding transition exploration returns to $c$ when $B$, rather than the
endogenous release set, is used as its initial release set. The members of $B$
are detached from $c$ and processed in the fixed order $\triangleright$; the
assignment within $c$ is ignored until a return occurs. When
$B=R(w,w')$, this is exactly the transition process defined in the main text.
For other $B$, it is only an auxiliary exploration used below in a finite
union bound.

\begin{lem}[Bounded-release return bounds]\label{lem:bounded_release_return}
\begin{enumerate}
    \item For all sufficiently large $n$, every $c>c^*(n)$, and on the event
    \begin{equation*}
        Y_c^n(w)>\frac{\mathbb E[Y_c^n(w)]}{2},
    \end{equation*}
    the following bound holds simultaneously for every nonempty
    $B\subset\mu_c^n(w)$, where $b=|B|\leq\bar q$:
    \begin{equation}\label{eq:pathwise_fixed_release_regular}
        \Pr\left[
            \hat F_{c,B}^n(w,w')
            \,\middle|\,
            \mathcal H^n(w)
        \right]
        \leq
        \frac{4b}{\mathbb E[Y_c^n(w)]}
        \leq
        \frac{4\bar q}{\mathbb E[Y_c^n(w)]}.
    \end{equation}

    \item Suppose the sequence is sufficiently thick for some finite $T$, and
    put $M_T^n\coloneqq\mathbb E[Y_T^n]$. For all sufficiently large $n$,
    every college $c$, and on the event
    \begin{equation*}
        \hat Y_T^n(w)>\frac{M_T^n}{2},
    \end{equation*}
    the following bound holds simultaneously for every nonempty
    $B\subset\mu_c^n(w)$, where $b=|B|\leq\bar q$:
    \begin{equation}\label{eq:pathwise_fixed_release_thick}
        \Pr\left[
            \hat F_{c,B}^n(w,w')
            \,\middle|\,
            \mathcal H^n(w)
        \right]
        \leq
        \frac{4Tb}{M_T^n}
        \leq
        \frac{4T\bar q}{M_T^n}.
    \end{equation}
\end{enumerate}
Both bounds are uniform over slot priorities and over holder and placeholder
configurations generated by the transition construction.
\end{lem}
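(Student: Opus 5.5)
The plan is the Kojima--Pathak "absorbing college" argument, run once for each student in $B$ and combined by a union bound. Fix a realized baseline record $\mathcal H^n(w)$ on the stated good event. Process the $b$ members of $B$ in $\triangleright$-order. Each displacement chain is a sequence of proposals in which every new proposal either exposes a fresh list entry of the active student or goes to a previously exposed college. The draw-by-draw probability \eqref{eq:deferred_draw_probability} remains valid conditional on the history, because both the baseline and transition explorations are non-anticipating. A return to $c$ requires that, at some step, an active student outside $\mathcal A(w)$ reaches the first slot of $c$. Since $c$ is not among the exposed entries of such a student (otherwise she would have proposed there in baseline DA, or $c$ lies below her baseline assignment), this must happen through a fresh draw that lands on $c$. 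A chain instead terminates without returning whenever a fresh draw lands on a college with a vacant slot that has not been touched. Under regularity every slot accepts every student, so such a college absorbs the chain.

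The comparison is made draw by draw, and I would carry it out in three steps.

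\textbf{Part 1.} Each college $d\le c$ counted by $Y_c^n(w)$ is unexposed and has all slots vacant, and it satisfies $p_d^n\ge p_c^n$. At each fresh draw, therefore, the conditional probability of hitting $c$ is at most $p_c^n/(1-\sum_{S_i}p)$. The conditional probability of hitting one of the absorbing colleges that are still vacant is at least $(\#\text{vacant counted colleges})\cdot p_c^n/(1-\sum_{S_i}p)$. The normalizing factor cancels in the ratio, so the chance that $c$ is hit before any absorbing college is at most $1/(\text{number of remaining absorbing colleges})$. A chain may stop for other reasons, such as replacing a placeholder or exhausting the list; those only help.

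\textbf{Part 2.} The first $b-1$ chains can consume at most $b-1\le\bar q$ absorbing colleges. For $n$ large, $\mathbb E[Y_c^n]/2-\bar q\ge \mathbb E[Y_c^n]/4$ by \eqref{eq:Yc_lower_bound}. Each chain therefore returns with probability at most $4/\mathbb E[Y_c^n(w)]$, and a union bound over the $b$ chains gives $4b/\mathbb E[Y_c^n(w)]$.

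\textbf{Part 3.} In the thick case, take the absorbing colleges to be $\hat V_T^n(w)$ instead. Each of these has a vacancy, though it may already be exposed to some students. Their probabilities satisfy $p_d^n\ge p_{\max}^n/T\ge p_c^n/T$, so the same ratio argument yields a per-chain bound of $T/(\#\text{remaining})\le 4T/M_T^n$. This holds once $M_T^n/2-\bar q\ge M_T^n/4$.

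The main obstacle is in the thick case. A college in $\hat V_T^n(w)$ may already appear on the active student's list, so it cannot be reached by a fresh draw. It may also be reached without being absorbing if the active student has lower priority there than the current holder, but a vacant slot accepts anyone. Finally, a previously exposed student may be the one reaching it. I would handle these as follows. I would count only colleges not yet in the active student's $S_i$; there are at most $k$ exceptions per student, and this constant is absorbed for large $n$. I would also observe that a vacant slot always absorbs, so vacancy alone, not unexposedness, is the relevant property. The hitting probabilities then need the lower bound $p_d\ge p_c/T$ only.

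I also need to check the claim that a return occurs only via a fresh draw landing on $c$. A student outside $\mathcal A(w)$ who has $c$ exposed ranks $c$ below her baseline assignment. After she is displaced, she descends her list and could reach $c$ through an exposed entry. This route must be charged as well. I would handle it by noting that the exposure of $c$ on her list was itself a fresh draw in baseline DA. I would also bound the number of such students by $N_c$-type counts, which is at most the unchanged probability mass. If that is too delicate, I would fall back on revealing each chain's draws afresh, which treats the exposed-but-unproposed entries as part of the deferred information.
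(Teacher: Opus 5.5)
Your argument is the paper's. It is an absorbing-college race at each fresh draw, in which the deferred-draw normalizer cancels. Earlier chains consume at most $\ell-1$ absorbers, at most $k$ further colleges are discarded per carrier in the thick case, and a union bound is taken over the $b$ chains. The one thing to repair is your last paragraph. It raises a case that cannot occur, and the remedies it proposes would not work if the case did occur.

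\textbf{Why the case cannot occur.} A student's list is revealed only when she must move past her current college. So at the end of baseline DA her exposed entries are exactly the colleges she proposed to, all weakly above her baseline assignment; nothing below it is exposed. Hence for $i\notin\mathcal A(w)$ the college $c$ is unexposed. During the exploration, $c$ can enter such a carrier's exposed set only through a fresh draw, and that draw is precisely the return event at which the process stops. So the alternative ``or $c$ lies below her baseline assignment'' in your parenthetical is vacuous. You should simply state, as the paper does, that a carrier outside $X$ has not yet exposed $c$.

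\textbf{Why your fallbacks would not work.} Had such a route existed, $N_c$-type counting would not rescue the lemma. The bound is conditional on $\mathcal H^n(w)$, and an exposed entry $c$ below a holder's assignment would be a fixed feature of that record, not a draw whose probability could be charged. Redrawing entries afresh would change the process being bounded rather than bound it.

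\textbf{Two small omissions in the thick case.} First, remove $c$ itself from $\hat V_T^n(w)$, since it can belong to that set; this gives $|U|\ge\hat Y_T^n(w)-\ell-k$. Second, note that placeholders sit only in slots that had real holders. Therefore the baseline vacancy guaranteed by $\hat N_d^n(w)<q_d$ survives the initialization, and every such college still absorbs.
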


\begin{proof}
Fix a realized baseline record and a nonempty release set $B$, and write
$X=\mathcal A(w)$. Each member of $B$ initiates one displacement chain. At
every point in a chain there is at most one active student: displacing a real
holder changes the identity of the active student but does not branch the
chain. Moving through several slots of the same original college requires no
additional original-college draw. Because every slot finds every student
acceptable, reaching a college with a vacant slot ends the current chain.
Slot-specific priorities affect only the identity of the next active student.

Consider first part~1. Write
\begin{equation*}
    M_c^n(w)\coloneqq\mathbb E[Y_c^n(w)],
    \qquad
    y\coloneqq Y_c^n(w)>\frac{M_c^n(w)}{2}.
\end{equation*}
The $y$ baseline-unrevealed colleges among $\{1,\ldots,c\}$ are vacant and
have no placeholders. Before the $\ell$-th displacement chain begins, at most
$\ell-1$ of them can have been used as absorbers. Hence at least
$y-(\ell-1)$ remain available.

Fix such a set $U$ of available absorbers. Before the current chain reaches
one of these colleges, every member of $U$ remains unexposed to every carrier.
If the active student belongs to $X$, she cannot return to $c$; if she lies
outside $X$, $c$ has not yet been exposed in her list. Thus, at every partial
history at which a return remains possible, conditional on the next new draw
belonging to $\{c\}\cup U$, the probability that it is $c$ equals
\begin{equation}
    \frac{p_c^n}{p_c^n+\sum_{d\in U}p_d^n}
    \leq
    \frac{1}{|U|+1},
\end{equation}
where the first expression follows from
\eqref{eq:deferred_draw_probability} and the inequality follows because every
$d\in U\subset\{1,\ldots,c\}$ satisfies $p_d^n\geq p_c^n$. Reaching a
member of $U$ absorbs the carrier and ends the chain. Hence the same bound
applies to the probability that the chain returns before it is absorbed. If
no such decisive draw occurs before the relevant lists are exhausted, the
chain does not return.

By \eqref{eq:Yc_lower_bound}, $M_c^n(w)\to\infty$ uniformly over
$c>c^*(n)$. Thus, for all sufficiently large $n$ and every
$\ell\leq b\leq\bar q$,
\begin{equation*}
    |U|+1
    \geq
    y-(\ell-1)+1
    \geq
    \frac{M_c^n(w)}{4}.
\end{equation*}
Each chain therefore returns with conditional probability at most
$4/M_c^n(w)$. A union bound over the $b$ chains proves
\eqref{eq:pathwise_fixed_release_regular}.

For part~2, write
\begin{equation*}
    \hat y\coloneqq\hat Y_T^n(w)>\frac{M_T^n}{2}.
\end{equation*}
Every college in $\hat V_T^n(w)$ has a baseline vacancy and satisfies
\begin{equation*}
    p_d^n\geq\frac{p_{\max}^n}{T}\geq\frac{p_c^n}{T}.
\end{equation*}
Remove the possible focal college $c$ from the absorber set. Before the
$\ell$-th chain, at most $\ell-1$ further absorbers have been used. At any
partial history of that chain, remove additionally the at most $k$ colleges
already appearing in the current carrier's list. The remaining set $U$ of
available absorbers satisfies
\begin{equation*}
    |U|\geq\hat y-\ell-k.
\end{equation*}
Since $M_T^n\to\infty$ under sufficient thickness, for all sufficiently large
$n$ and every $\ell\leq\bar q$ we have $|U|\geq M_T^n/4$. If the current
carrier belongs to $X$, return is impossible at that history. Otherwise,
conditional on the next new draw belonging to $\{c\}\cup U$, the
deferred-decision rule gives
\begin{equation*}
    \Pr[\text{next draw}=c
        \mid\text{next draw}\in\{c\}\cup U]
    =
    \frac{p_c^n}{p_c^n+\sum_{d\in U}p_d^n}
    \leq
    \frac{1}{1+|U|/T}
    \leq
    \frac{4T}{M_T^n}.
\end{equation*}
This bound holds after every possible change of carrier. Since drawing any
member of $U$ absorbs the carrier, the probability that the current chain
returns is at most $4T/M_T^n$. A union bound over the $b$ chains proves
\eqref{eq:pathwise_fixed_release_thick}.

Finally, placeholders do not invalidate either absorber count. A
baseline-unrevealed college has no placeholder. At a college in
$\hat V_T^n(w)$, the inequality $\hat N_d^n(w)<q_d$ leaves at least one vacant
slot even after holders moved to $c$ are replaced by placeholders. Replacing a
placeholder ends the current chain, while rejection by a placeholder merely
preserves the old cutoff. Thus placeholders cannot eliminate an absorber or
increase the return probability.
\end{proof}

\begin{proof}
For precedence orders $w,w'$ of college $c$, let $F_c^n(w,w')$ be the event
that the $(w,w')$-transition process returns to $c$. By
Theorem~\ref{thm:no_feedback_order_comparison},
\begin{equation}\label{eq:failure_implies_feedback}
    (\mathcal G_c^n)^c
    \subset
    F_c^n(v_{A_cB_c}^c,v_{B_cA_c}^c)
    \cup
    F_c^n(v_{B_cA_c}^c,v_{A_cB_c}^c).
\end{equation}
It is therefore enough to bound one directed return event.

Fix a baseline order $w$ and an alternative order $w'$ at college $c$. If
$R(w,w')=\emptyset$, no displacement chain is initiated and the transition
cannot return. Otherwise, the realized release set is a nonempty subset of
$\mu_c^n(w)$, and the corresponding fixed-release exploration coincides with
the actual transition process. Hence, pathwise,
\begin{equation}\label{eq:return_union_over_release_sets}
    F_c^n(w,w')
    \subset
    \bigcup_{\emptyset\neq B\subset\mu_c^n(w)}
    \hat F_{c,B}^n(w,w').
\end{equation}
This finite union lets us avoid relying on any separate conditioning argument
for the endogenous release set.

We first prove part~\ref{part:1_large_market_order_comparison}. Fix
$c>c^*(n)$ and put
\begin{equation*}
    M_c^n(w)\coloneqq\mathbb E[Y_c^n(w)],
    \qquad
    E_c^n(w)
    \coloneqq
    \left\{Y_c^n(w)>\frac{M_c^n(w)}{2}\right\}.
\end{equation*}
On $E_c^n(w)$, Lemma~\ref{lem:bounded_release_return} gives, simultaneously
for every nonempty $B\subset\mu_c^n(w)$,
\begin{equation}\label{eq:fixed_release_return_bound}
    \Pr\left[
        \hat F_{c,B}^n(w,w')
        \,\middle|\,
        \mathcal H^n(w)
    \right]
    \leq
    \frac{4\bar q}{M_c^n(w)}.
\end{equation}
Conditional on the baseline record, $\mu_c^n(w)$ is fixed and has cardinality
at most $q_c\leq\bar q$. Thus it has at most $2^{\bar q}-1$ nonempty subsets.
Combining \eqref{eq:return_union_over_release_sets} and
\eqref{eq:fixed_release_return_bound}, we obtain on $E_c^n(w)$
\begin{equation}\label{eq:conditional_directed_union_bound}
    \Pr\left[
        F_c^n(w,w')
        \,\middle|\,
        \mathcal H^n(w)
    \right]
    \leq
    \frac{4\bar q(2^{\bar q}-1)}{M_c^n(w)}.
\end{equation}
Moreover, Lemma~\ref{lem:unused_college_estimates} gives
\begin{equation}\label{eq:Yc_concentration}
    \Pr[(E_c^n(w))^c]
    \leq
    \frac{4}{M_c^n(w)}.
\end{equation}
Because $E_c^n(w)$ is determined by the baseline record, the tower property
and \eqref{eq:conditional_directed_union_bound} yield
\begin{equation*}
\begin{split}
    \Pr[F_c^n(w,w')]
    &=
    \mathbb E\left[
        \mathbf 1_{E_c^n(w)}
        \Pr[F_c^n(w,w')\mid\mathcal H^n(w)]
    \right]\\
    &\quad+
    \mathbb E\left[
        \mathbf 1_{(E_c^n(w))^c}
        \Pr[F_c^n(w,w')\mid\mathcal H^n(w)]
    \right]\\
    &\leq
    \frac{4\bar q(2^{\bar q}-1)}{M_c^n(w)}
    +\Pr[(E_c^n(w))^c].
\end{split}
\end{equation*}
Using \eqref{eq:Yc_concentration}, we conclude that
\begin{equation}\label{eq:directed_return_bound_union}
    \Pr[F_c^n(w,w')]
    \leq
    \frac{4\left[\bar q(2^{\bar q}-1)+1\right]}{M_c^n(w)}.
\end{equation}

Combining \eqref{eq:directed_return_bound_union} with
\eqref{eq:Yc_lower_bound}, for every $c>c^*(n)$ and every directed comparison
$(w,w')$ considered here,
\begin{equation}\label{eq:epsilon_n_definition}
    \Pr[F_c^n(w,w')]
    \leq
    \varepsilon_n
    \coloneqq
    \frac{
        \left[\bar q(2^{\bar q}-1)+1\right]
        \log(\bar q n)
    }{
        2k\sqrt{\bar q n}
    },
\end{equation}
where $\varepsilon_n\to0$. Applying this bound to both directed transitions in
\eqref{eq:failure_implies_feedback} gives
\begin{equation*}
    \Pr[(\mathcal G_c^n)^c]
    \leq2\varepsilon_n
    \qquad
    \text{for every }c>c^*(n).
\end{equation*}
For the remaining colleges, use the trivial bound of one. Therefore,
\begin{equation*}
\begin{split}
    \frac{1}{n}
    \sum_{c\in C^n}\Pr[(\mathcal G_c^n)^c]
    &\leq
    \frac{\lceil c^*(n)\rceil}{n}
    +2\varepsilon_n
    \longrightarrow0,
\end{split}
\end{equation*}
because $c^*(n)/n\to0$. This proves
part~\ref{part:1_large_market_order_comparison}.

We now prove part~\ref{part:2_large_market_order_comparison}. Suppose the
sequence is sufficiently thick for some finite $T$, and write
\begin{equation*}
    M_T^n\coloneqq\mathbb E[Y_T^n].
\end{equation*}
Fix any college $c$ and any directed comparison $(w,w')$, and define the
baseline-history event
\begin{equation*}
    \hat E_T^n(w)
    \coloneqq
    \left\{\hat Y_T^n(w)>\frac{M_T^n}{2}\right\}.
\end{equation*}
On this event, Lemma~\ref{lem:bounded_release_return} gives, simultaneously
for every nonempty $B\subset\mu_c^n(w)$,
\begin{equation}\label{eq:fixed_release_thickness_bound}
    \Pr\left[
        \hat F_{c,B}^n(w,w')
        \,\middle|\,
        \mathcal H^n(w)
    \right]
    \leq
    \frac{4T\bar q}{M_T^n}.
\end{equation}
The same finite union as above therefore gives, on $\hat E_T^n(w)$,
\begin{equation*}
    \Pr\left[
        F_c^n(w,w')
        \,\middle|\,
        \mathcal H^n(w)
    \right]
    \leq
    \frac{4T\bar q(2^{\bar q}-1)}{M_T^n}.
\end{equation*}
Furthermore, the pathwise inequality
$Y_T^n\leq\hat Y_T^n(w)$ and
Lemma~\ref{lem:unused_college_estimates} imply
\begin{equation}\label{eq:YT_concentration}
\begin{split}
    \Pr[(\hat E_T^n(w))^c]
    &\leq
    \Pr\left[Y_T^n\leq\frac{M_T^n}{2}\right]\\
    &\leq
    \frac{4}{M_T^n}.
\end{split}
\end{equation}
Applying the tower property exactly as above yields
\begin{equation}\label{eq:uniform_directed_return_bound}
    \Pr[F_c^n(w,w')]
    \leq
    \frac{
        4\left[T\bar q(2^{\bar q}-1)+1\right]
    }{
        M_T^n
    }
\end{equation}
for every $c\in C^n$ and either directed comparison $(w,w')$ at that college.
Applying \eqref{eq:uniform_directed_return_bound} to both directions in
\eqref{eq:failure_implies_feedback}, we obtain
\begin{equation*}
    \sup_{c\in C^n}
    \Pr[(\mathcal G_c^n)^c]
    \leq
    \frac{
        8\left[T\bar q(2^{\bar q}-1)+1\right]
    }{
        \mathbb E[Y_T^n]
    }
    \longrightarrow0,
\end{equation*}
because sufficient thickness requires
$\mathbb E[Y_T^n]\to\infty$. This proves
part~\ref{part:2_large_market_order_comparison}.
\end{proof}

\subsection{Proof of Lemma~\ref{lem:bm_truncated_order}}

\begin{proof}
For a criterion $t$ and a fixed set $P$, define a preference over sets disjoint from $P$ by
\begin{equation}
    S\succeq_t^P S'
    \quad\Longleftrightarrow\quad
    P\cup S\succeq_t P\cup S'.
\end{equation}
Because $\succeq_t$ is responsive to $\succ_t$, the induced preference $\succeq_t^P$ is also
responsive to the same ranking $\succ_t$.  We may therefore apply the single-college results using
$\succeq_t^P$ as the relevant responsive extension.

Let $\bar q$ be the length of the common prefix $v$ in $v_{AB}$ and $v_{BA}$.
If $h\leq\bar q$, the two suffixes share
$(v(h+1),\ldots,v(\bar q))$ as a common prefix, followed by the
same $A$- and $B$-blocks in opposite orders.
Applying Theorem~\ref{thm:category_utility_increasing} with the
responsive preferences conditioned on the common set $P$ gives both
desired inequalities.

Suppose next that $h>\bar q$.  The two suffixes can be written as
\begin{equation}
    v_{AB}^{-h}
    = (\underbrace{A,\ldots,A}_{a},\underbrace{B,\ldots,B}_{b}),
    \qquad
    v_{BA}^{-h}
    = (\underbrace{B,\ldots,B}_{b'},\underbrace{A,\ldots,A}_{a'}),
\end{equation}
where
\begin{equation}
    a+b=a'+b'=q-h,
    \qquad
    a'\geq a,
    \qquad
    b\geq b'.
\end{equation}
Empty blocks are allowed. 
For notational convenience, we denote
\begin{equation}
    B^bA^a
    = (\underbrace{B,\ldots,B}_{b},\underbrace{A,\ldots,A}_{a}),
    \qquad
    A^{a'}B^{b'}
    = (\underbrace{A,\ldots,A}_{a'},\underbrace{B,\ldots,B}_{b'}).
\end{equation}

Applying
Theorem~\ref{thm:category_utility_increasing} under the responsive preference
$\succeq_A^P$ gives
\begin{equation}
    P\cup\Ch_{B^bA^a}(X)
    \succeq_A
    P\cup\Ch_{v_{AB}^{-h}}(X).
\end{equation}
Starting from $B^b A^a$, replace the last $b-b'$ $B$-slots, one at a time from right to left, by
$A$-slots.  At every replacement, all subsequent slots are assigned to $A$, so
Theorem~\ref{thm:category_slot_increasing}, again under the conditioned responsive preference,
applies.  The resulting order is $v_{BA}^{-h}$.  Hence
\begin{equation}
    P\cup\Ch_{v_{BA}^{-h}}(X)
    \succeq_A
    P\cup\Ch_{B^bA^a}(X)
    \succeq_A
    P\cup\Ch_{v_{AB}^{-h}}(X).
\end{equation}

For criterion $B$, first apply Theorem~\ref{thm:category_utility_increasing} to compare
$v_{BA}^{-h}$ with $A^{a'}B^{b'}$.  Then replace the last $a'-a$ $A$-slots in
$A^{a'}B^{b'}$, from right to left, by $B$-slots.  All subsequent slots are assigned to $B$, so
Theorem~\ref{thm:category_slot_increasing} applies at every step and produces $v_{AB}^{-h}$.
This yields the second inequality.
\end{proof}

\subsection{Proof of Theorem~\ref{thm:bm_later_is_advantaged}}

\begin{proof}
Couple the two Boston runs using the same submitted rank-order lists.  
If the two runs accept the
same set of students at $c_0$ in every round, their final admitted sets are identical and the result
is immediate.  
Otherwise, let $r^*$ be the first round in which the sets accepted by $c_0$ differ.
Before round $r^*$, the two runs have accepted and rejected the same sets of
students at $c_0$, although their assignments to individual slots may differ.
Consequently, the histories at every other college are identical, and the same set $X$ of
students applies to $c_0$ in round $r^*$.

Let $P$ be the common set of students already accepted by $c_0$, let $h=|P|$, and let $X^+$ be
the subset of $X$ acceptable to the criteria of $c_0$.  This subset is well-defined by criterion-independent acceptability.  Because every student accepted in an earlier
round is acceptable to every criterion of $c_0$, the occupied slots form the first $h$ positions of
each precedence order.  The vacant slots are therefore described by the suffixes
$v_{AB}^{-h}$ and $v_{BA}^{-h}$.  If $|X^+|<q-h$, both precedence orders accept all
students in $X^+$ and reject all students in $X\setminus X^+$.  If $|X^+|=q-h$, both orders accept
exactly $X^+$.  In either case, the accepted sets cannot first differ in round $r^*$.  Therefore, $|X^+|>q-h$.
Both runs fill all remaining seats in round $r^*$, and their final admitted sets are
\begin{equation}
    P\cup\Ch_{v_{AB}^{-h}}(X^+)
    \qquad\text{and}\qquad
    P\cup\Ch_{v_{BA}^{-h}}(X^+),
\end{equation}
respectively.  Lemma~\ref{lem:bm_truncated_order} gives the two desired comparisons.  Because
acceptances are final and $c_0$ is full at the end of round $r^*$, any subsequent rejection cascade
elsewhere in the market cannot change its admitted set.
\end{proof}

\end{document}